\crefname{hypothesis}{Hypothesis}{Hypotheses}
\Crefname{ALC@unique}{Line}{Lines}
\colorlet{texcscolor}{blue!50!black}
\colorlet{texemcolor}{red!70!black}
\colorlet{texpreamble}{red!70!black}
\colorlet{codebackground}{black!25!white!25}
\newcommand{\range}[1]{[1, #1]}
\newcommand{\frange}[2]{[#1, #2]}
\newcommand{\B}{\mathbb B}
\newcommand{\C}{\{0,1,\ast\}}
\newcommand{\vb}[1]{\mathbf{#1}}
\newcommand{\vertices}{v}
\newcommand{\TT}{{t}}
\newcommand{\T}{{\bf T}}
\lstdefinestyle{siamlatex}{%
  style=tcblatex,
  texcsstyle=*\color{texcscolor},
  texcsstyle=[2]\color{texemcolor},
  keywordstyle=[2]\color{texemcolor},
  moretexcs={cref,Cref,maketitle,mathcal,text,headers,email,url},
}
\tiny\color{gray},
\itshape\color{gray},
\DeclareTotalTCBox{\code}{ v O{} }
{ 
  fontupper=\ttfamily\color{black},
  nobeforeafter,
  tcbox raise base,
  colback=codebackground,colframe=white,
  top=0pt,bottom=0pt,left=0mm,right=0mm,
  leftrule=0pt,rightrule=0pt,toprule=0mm,bottomrule=0mm,
  boxsep=0.5mm,
  #2}{#1}
\patchcmd\newpage{\vfil}{}{}{}
\title{Computational Complexity of Minimal Trap Spaces in Boolean Networks\thanks{%
\funding{KM and KL acknowledges support from the National Research Foundation of Korea(NRF) grant
	funded by the Korea government(MEST) (grant number: NRF-2022R1F1A1074140)
	LP acknowledges support from the French Agence Nationale pour la Recherche (ANR) in the scope of the project BNeDiction (grant number: ANR-20-CE45-0001).}}}
\author{Kyungduk Moon\thanks{Department of Industrial and Management Engineering, POSTECH, South Korea (\email{kaleb.moon@postech.ac.kr}, \email{kblee@postech.ac.kr}).}
\and Kangbok Lee\footnotemark[2]
\and Loïc Paulevé\thanks{Univ. Bordeaux, CNRS, Bordeaux INP, LaBRI, UMR 5800 F-33400 Talence, France (\email{loic.pauleve@labri.fr})}}
\title{Computational Complexity of Minimal Trap Spaces in Boolean Networks\thanks{%
\funding{KM and KL acknowledges support from the National Research Foundation of Korea(NRF) grant
	funded by the Korea government(MEST) (grant number: NRF-2022R1F1A1074140)
	LP acknowledges support from the French Agence Nationale pour la Recherche (ANR) in the scope of the project BNeDiction (grant number: ANR-20-CE45-0001).}}}
\author{Kyungduk Moon\thanks{Department of Industrial and Management Engineering, POSTECH, South Korea (\email{kaleb.moon@postech.ac.kr}, \email{kblee@postech.ac.kr}).}
\and Kangbok Lee\footnotemark[2]
\and Loïc Paulevé\thanks{Univ. Bordeaux, CNRS, Bordeaux INP, LaBRI, UMR 5800 F-33400 Talence, France (\email{loic.pauleve@labri.fr})}}
\begin{document}
\maketitle

\begin{tcbverbatimwrite}{tmp_\jobname_abstract.tex}
\begin{abstract}
A Boolean network (BN) is a discrete dynamical system defined by a Boolean function that maps to the domain itself. A trap space of a BN is a generalization of a fixed point, which is defined as the sub-hypercubes closed by the function of the BN. A trap space is minimal if it does not contain any smaller trap space.
Minimal trap spaces have applications for the analysis of attractors of BNs with various update modes.
This paper establishes the computational complexity results of three decision problems related to minimal trap spaces:
the decision of the trap space property of a sub-hypercube,
the decision of its minimality, and 
the decision of the membership of a given configuration to a minimal trap space.
Under several cases on Boolean function 
representations, we investigate the computational complexity
of each problem.
In the general case, we demonstrate that the trap space property is coNP-complete, and the minimality
and the membership properties are $\Pi_2^{\text P}$-complete.
The complexities drop by one level in the polynomial hierarchy whenever the local functions of the
BN are either unate, or are 
represented using truth-tables, binary decision diagrams, or double DNFs (Petri net encoding):
the trap space property can be decided in a polynomial time, whereas deciding the minimality and the membership are coNP-complete.
When the BN is given as its functional graph, all these problems are in P.
\end{abstract}

\begin{keywords}
Automata network, Trap space, Computational complexity, Boolean function representation, System dynamics, Attractors
\end{keywords}

\begin{MSCcodes}
68Q17, 68R07, 94C11, 37M22, 37N25
\end{MSCcodes}
\end{tcbverbatimwrite}
\begin{abstract}
A Boolean network (BN) is a discrete dynamical system defined by a Boolean function that maps to the domain itself. A trap space of a BN is a generalization of a fixed point, which is defined as the sub-hypercubes closed by the function of the BN. A trap space is minimal if it does not contain any smaller trap space.
Minimal trap spaces have applications for the analysis of attractors of BNs with various update modes.
This paper establishes the computational complexity results of three decision problems related to minimal trap spaces:
the decision of the trap space property of a sub-hypercube,
the decision of its minimality, and
the decision of the membership of a given configuration to a minimal trap space.
Under several cases on Boolean function
representations, we investigate the computational complexity
of each problem.
In the general case, we demonstrate that the trap space property is coNP-complete, and the minimality
and the membership properties are $\Pi_2^{\text P}$-complete.
The complexities drop by one level in the polynomial hierarchy whenever the local functions of the
BN are either unate, or are
represented using truth-tables, binary decision diagrams, or double DNFs (Petri net encoding):
the trap space property can be decided in a polynomial time, whereas deciding the minimality and the membership are coNP-complete.
When the BN is given as its functional graph, all these problems are in P.
\end{abstract}

\begin{keywords}
Automata network, Trap space, Computational complexity, Boolean function representation, System dynamics, Attractors
\end{keywords}

\begin{MSCcodes}
68Q17, 68R07, 94C11, 37M22, 37N25
\end{MSCcodes}


\section{Introduction} \label{sec:intro}

A Boolean network (BN) is a dynamical system defined by 
a function $f$ of the Boolean domain with a fixed dimension $n$ that maps to the domain itself,
i.e., $f: \B^n\to\B^n$ with $\B=\{0,1\}$. 
The function mapping to a component of the image of $f$ is called a \emph{local function}. We denote the local function mapping to the $i$-th component of the image as $f_i:\B^n\to\B$ for $i \in \{1, \ldots, n\}$.
Given a Boolean vector $\vb x\in\B^n$ 
referred to as a \emph{configuration}, one can define a set
of succeeding configurations by $f$ following an \emph{update mode}~\cite{aracena09,Kauffman69,PS21,thomas73}, leading to a dynamical system. 
Two popular update modes are the synchronous update mode and  asynchronous update modes.
The \textit{synchronous update mode} associates $f(\vb x)$ as the unique succeeding configuration of $\vb x$ by $f$. On the other hand, an asynchronous update mode may associate multiple configurations of which some components match to the corresponding local functions evaluated with $\vb x$; if the $i$-th component matches, then ${\vb y}_i = f_i(\vb x)$.
The \textit{fully asynchronous update mode} is a specific example that associates $\vb x$ with any configuration
$\vb y$ which differs from $\vb x$ by exactly one component that matches to the corresponding local function.
BNs are studied in various disciplines such as discrete mathematics~\cite{aracena_2011_Combinatorics,richard_2009_Positive} and 
dynamical system theory~\cite{Bridoux_2020,Gamard21,pauleve_2012_Static}. They also have wide applications to the modeling of complex systems such as biological
systems~\cite{Akutsu2018,kauffman1993origins,Montagud22,schwab_concepts_2020,Zanudo21}, and social
behaviors~\cite{Grabisch_2013,Poindron_2021}, to name but a few.

The literature addresses a vast zoo of update modes, generating possibly different dynamics from the
same BN $f$~\cite{PS2022}.
In this context, the dynamical properties of BNs which are independent of the update mode are of particular interest because they show inherent dynamical properties of a given BN.
The prime example is the study of fixed points of $f$, i.e., the configurations $\vb x$ such that
$f(\vb x)=\vb x$.
Indeed, a fixed point of $f$ can be assumed to be a stable state of dynamics that does not change after a transition with any update mode. 
Conversely, a stable state under the synchronous or the asynchronous update mode is a fixed point.
Nevertheless, some specific update modes may exhibit additional stable states~\cite{PS2022}.
The notion of fixed points of $f$ can be generalized to \emph{trap spaces}.
A trap space is a sub-hypercube (an $n$-dimensional sub-graph of the $n$-dimensional hypercube where some dimensions can
be  fixed to be singular values)
closed by $f$ such that for any vertex $\vb x$ of the trap space,
$f(\vb x)$ is also one of its vertices.
A \emph{minimal trap space} is a trap space that contains no other trap spaces.
Therefore, a fixed point of $f$ is a specific case of minimal trap spaces where all dimensions are fixed.

Minimal trap spaces in BNs have been studied as approximations of \emph{attractors} with
(a)synchronous update modes~\cite{klarner2015Approximating}, which are important features
of the long-term dynamical properties of BNs~\cite{kauffman1993origins}.
Given an update mode, an attractor is defined as an inclusion-wise minimal set of configurations
which are closed by transitions. Equivalently, an attractor is a set of configurations
satisfying the following two conditions. First, there exists a sequence of transitions between any
pair of its configurations. Second, if there exists a sequence of transitions from one of its configuration $\vb x$ to another configuration $\vb x'$, then they belong to the same attractor.
If an attractor is composed by a single configuration, it is a fixed point; otherwise, it is called a \textit{cyclic attractor}.
It appears that any minimal trap space necessarily encloses at least one attractor of any update
mode~\cite{klarner2015Computing,pauleve2020Reconciling,PS2022}.
Moreover, if the minimal trap space is not a fixed point, the enclosed (a)synchronous attractors are necessarily
cyclic.
Beside the synchronous and the asynchronous update mode,
minimal trap spaces are exactly the attractors of BNs under the most permissive update
mode~\cite{pauleve2020Reconciling,PS21}, which guarantees to capture all transitions realized by any
multi-valued refinement of the BN.

So far, the literature has essentially focused on algorithms and implementations for enumerating
minimal trap spaces of
BNs~\cite{klarner2015Computing,pauleve2020Reconciling,trinh_2022_Minimal}.
Nevertheless, whereas these algorithms indicate upper bounds for the computational complexity of decision problems related to minimal trap spaces, no lower bound has been characterized.
In this paper, we provide computational complexity results of problems related to the minimal trap spaces.
We focus on three fundamental decisions problems:
\begin{description}
	\item[TRAPSPACE($f$, $\vb h$)]:~\\
	Given a BN $f$ and a sub-hypercube $\vb h$, $\vb h$ is a trap space of $f$.
	\item[MINTRAP($f$, $\vb h$)]:~\\
	Given a BN $f$ and a sub-hypercube $\vb h$, $\vb h$ is a minimal trap space of $f$.
	\item[IN-MINTRAP($f$, $\vb x$)]:~\\
	Given a BN $f$ and a configuration $\vb x$, $\vb x$ is a vertex of a minimal trap space of $f$.
\end{description}

We study the computational complexity of these problems depending on Boolean function
representations and on the unate property of local functions, as summarized in
Table~\ref{tab:results}.
In the case of Boolean functions represented as propositional formulas,
upper bounds of these three decisions problems have been determined in
\cite{pauleve2020Reconciling}:
TRAPSPACE is in coNP, whereas MINTRAP and IN-MINTRAP are in $\Pi^{\text P}_2$.
Moreover, whenever the BN is \emph{locally monotone}, i.e., each of its local function is unate (its
expression does not contain a variable appearing both positively and negatively),
they showed that
TRAPSPACE is in P, whereas MINTRAP and IN-MINTRAP are in coNP.
We complete the results of \cite{pauleve2020Reconciling} by demonstrating the lower bound results
for each corresponding case. We further consider three representations for local functions: truth
tables, binary decision diagrams, and double DNFs. These representations have practical relevance
since binary decision diagrams are frequently employed by software  \cite{Naldi_2018} and double
DNFs are employed by Petri nets \cite{CHJPS14-CMSB,gored-TCBB,trinh_2022_Minimal}.
For all three representations, the same computational complexities are demonstrated as the locally monotone case.
Finally, we also consider a BN
represented by its \textit{functional graph}, which matches with the state transition graph
under the synchronous update mode:
this graph associates each state in $\B^n$ to its image by $f$.
Therefore, we consider two classes of representations of BNs: either by the representation of their local functions, or by the representation of the global function $f$.

\begin{table}[ht]
	\caption{\label{tab:results} The computational complexity of decision problems related to
		trap spaces}
	\centering
	\begin{tabular}{@{}cc c ccc@{}}
		\hline
		\multicolumn{2}{c}{\textbf{Boolean network}} && \multicolumn{3}{c}{\textbf{Problems}} \\
		\cline{1-2} \cline{4-6} 
		Representation & Unate property &&  TRAPSPACE  &  MINTRAP  &  IN-MINTRAP \\ 		\hline
		\multicolumn{5}{@{}l}{\textbf{\textit{
					Representation with local functions
		}}} \\ 
		Propositional formula & General &&  coNP-complete$^\dagger$  &  $\Pi_2^{\mathrm P}$-complete$^\dagger$  &  $\Pi_2^{\mathrm P}$-complete$^\dagger$\\
		Propositional formula & Locally monotone  &&  P$^\dagger$  &  coNP-complete$^\dagger$  &  coNP-complete$^\dagger$  \\
		Truth table & General &&  P  &  coNP-complete   &  coNP-complete \\
		Binary decision diagram & General &&  P  &  coNP-complete   &  coNP-complete \\
		Double DNFs & General &&  P  &  coNP-complete   &  coNP-complete \\[0.3em]
		\multicolumn{6}{@{}l}{\textbf{\textit{
					Representation with a state transition graph
		}}}
		\\[0.3em]
		Functional graph   & General &&  P  &  P  &  P \\
				\hline
		\multicolumn{6}{l}{
			\begin{minipage}{0.9\textwidth}
				\small $^\dagger$the upper bound results are presented in \cite{pauleve2020Reconciling}
			\end{minipage}
		}
	\end{tabular}
\end{table}

The rest of this paper is organized as follows.
In Sec. \ref{sec:definition}, we introduce notations and terminologies that are used to define the problems and explain the results. 
In Sec. \ref{sec:main_results}, we provide computational complexity results of these problems in
different BN settings.
Sec. \ref{sec:conclusion} provides a concluding remark.

\section{Preliminaries}\label{sec:definition}

We denote integers ranging from $1$ to $n$ by $\range n:=\{1, \ldots, n\}$.
We use an interval subscript of a vector to denote the list of components lying on the interval's range. For example, $\vb x_{[1,n_1]}$ denotes the vector concatenating the first $n_1$ components of vector $\vb x$. 

\subsection{Representations and the unate property of local functions in Boolean networks}
\label{subsec:notation_representation}

A Boolean function on $n$ variables is of the form $\phi:\B^n\to \B$.
In this context, the following decision problems are relevant to our study.
The SAT problem is to decide  whether there exists a configuration 
$\vb x\in\B^n$ such that $\phi(\vb x)=1$.
The TAUTOLOGY problem is to decide if 
$\phi(\vb x)=1$ for all configuration $\vb x\in\B^n$.
The NOT-TAUTOLOGY problem is to decide if there exists a configuration $\vb x$ such that
$\phi(\vb x)=0$.
We can generalize SAT by partitioning variables and alternately putting quantifiers `$\exists$' and `$\forall$'  on them. Given the number of quantifiers $l\in \mathbb{Z}^+$, and the number of indices of partitions $n_1, \ldots, n_l$ such that $n_{j} < n_{j+1}$ for each $j\in\range {l-1}$, we can define the following generalization to $\Sigma_l$SAT:
\begin{align*}
	\Sigma_l\text{SAT: decide if } 
	\exists \vb x_{[1, n_1]} 
	\forall \vb x_{[n_1+1, n_2]}
	\exists \vb x_{[n_2+1, n_3]} 
	\cdots 
	\phi(\vb x)=1.
\end{align*}
Analogously, we can generalize TAUTOLOGY to $\Pi_l$SAT by replacing `$\exists$' with `$\forall$' and vice versa from $\Sigma_l$SAT. Notably, $\Sigma_l$SAT and $\Pi_l$SAT are together called 
\textit{the true quantified Boolean formula} (TQBF) problem, which is well known as a complete
problem in the polynomial hierarchy~\cite{stockmeyer1976Polynomialtime}.
The polynomial hierarchy is a generalization of NP and coNP that can be defined with \textit{oracles}, imaginary machines that instantly give an answer to a group of decision problems~\cite{arora_2009_Computational}. Given an oracle for NP, we can define $\Sigma^\text{P}_2$ as the group of problems of which the answer is verified as true in a nondeterministic polynomial time. Analogously, we can define $\Pi^\text{P}_2$ as the group of problems of which the answer is verified as false in a nondeterministic polynomial time using an oracle for NP. Furthermore, we can define $\text{P}^\text{NP}$ as the group of problems that are  polynomial time solvable using an oracle for NP. Note that $\Sigma^\text{P}_2$, $\Pi^\text{P}_2$, and $\text{P}^\text{NP}$ can be equivalently defined using an oracle for coNP with no difference. We can repeat this to define the classes of problems in the polynomial hierarchy, as depicted in  Figure \ref{fig:hierarchy}  where the arrows denote inclusion of the classes. For each class, we can define complete problems that can be reduced from all problems in the same class, implying that they are the hardest problems in their classes.
In this paper, we limit our focus to SAT, TAUTOLOGY, NOT-TAUTOLOGY, and $\Pi_2$SAT.
The computational complexity of these three problems depends on the representation of $\phi$ and can be
refined depending on its unate property.

\begin{figure}[h]
	\centering
	\begin{tikzpicture}[->, node distance=2cm, semithick]
		\node (P) {$\Sigma_0^\text{P}$ = P = $\Pi_0^\text{P}$};
		\node (Sigma1) [above left of=P]       {NP = $\Sigma_1^\text{P}$ \hspace*{0.9cm}};
		\node (Pi1)    [above right of=P]      {\hspace*{1.2cm} $\Pi_1^\text{P}$ = coNP};
		\node (Delta2) [above left of=Pi1]     {$\text{P}^\text{NP}$};
		\node (Sigma2) [above left of=Delta2]  {$\Sigma_2^\text{P}$};
		\node (Pi2)    [above right of=Delta2] {$\Pi_2^\text{P}$};
		\node (dots)   [above of=Delta2]       {\vdots};
		\draw (P)      -> (Sigma1);
		\draw (P)      -> (Pi1);
		\draw (Sigma1) -> (Sigma2);
		\draw (Sigma1) -> (Delta2);
		\draw (Pi1)    -> (Pi2);
		\draw (Pi1)    -> (Delta2);
		\draw (Delta2) -> (Sigma2);
		\draw (Delta2) -> (Pi2);
	\end{tikzpicture}
	\caption{The first two levels of the polynomial hierarchy}
	\label{fig:hierarchy}
\end{figure}
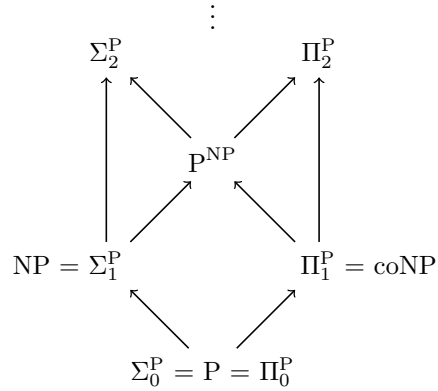

\paragraph{As propositional formula}
Boolean function $\phi$ can be represented as a propositional formula, which  consists of Boolean variables
$\vb x_1, \ldots,\vb  x_n$ and logical connectives $\wedge$ (conjunction), $\vee$ (disjunction), and
$\neg$ (negation).
The size of a formula is its length, which is proportional to the total count of variables and connectives appearing in the input string.
With this representation, SAT and NOT-TAUTOLOGY are NP-complete, TAUTOLOGY is coNP-complete, and $\Pi_2$SAT is $\Pi_2$-complete~\cite{stockmeyer1976Polynomialtime}. 

\paragraph{As disjunctive normal form (DNF)}
Boolean function $\phi$ 
can be represented as a propositional formula consisting of a disjunction of
conjunctive clauses. Negations are allowed only on variables and not on clauses. A DNF can be
equivalently represented as a list of sets of literals, where a literal is either a variable or a
negated variable. Any propositional formula can be represented in DNF with clauses consisting of at
most three literals (3DNF), although its size may be exponentially large.
When $\phi$ is represented as a DNF (or 3DNF), SAT is in P because $\phi$ is satisfiable if
and only if it contains at least one clause and none of the variables appear as both positive and
negative literal in a single clause. TAUTOLOGY is coNP-complete and NOT-TAUTOLOGY is NP-complete
from the fact that 3CNF SAT (given as a conjunctive normal form with three literals per clause) is NP-complete  \cite{arora_2009_Computational}.

\paragraph{Unate (monotone) case}
Boolean function  $\phi$ is \emph{unate} if
there exists an ordering of components
\(\preceq\in \{\leq,\geq\}^n\)
such that
\(
\forall \vb x,\vb y\in\B^n,
\Big(
(\vb x_1\preceq_1 \vb y_1)
\wedge \cdots \wedge 
(\vb x_n\preceq_n \vb y_n)
\Big) \Rightarrow
\phi(\vb x) \leq \phi(\vb y)
\).
In other words, for each component $j\in\range n$ and every configuration $\vb x\in\B^n$, $\phi(\vb x_{[1,j-1]}0\vb x_{[j+1,n]}) \preceq_j \phi(\vb x_{[1,j-1]}1\vb x_{[j+1,n]})$ holds.
When $\phi$ is unate and the ordering of components is provided, SAT, TAUTOLOGY, and NOT-TAUTOLOGY
can be decided in P because only the maximal (or the minimal) assignment to the ordering needs to be
evaluated.

\paragraph{As truth table (TT)}
Boolean function  $\phi$ can be encoded as binary vector $\TT$ with $2^n$ rows, where for each row
$m\in\range{2^n}$, $\TT_{m}$ is the value of $f_i(\operatorname{bin}(m-1))$ with
$\operatorname{bin}(m)$ being the binary representation of $m$.
With the truth table representation, SAT, TAUTOLOGY, and NOT-TAUTOLOGY are in P because a satisfying
or an unsatisfying assignment can be searched in linear time to the size of the truth table.

\paragraph{As binary decision diagram (BDD)}
A BDD has a directed acyclic graph structure with a unique root and at most two terminal nodes
among 0 and 1~\cite{drechsler_1998_Binary}.
Each non-terminal node is associated to a component $i\in\range n$ and has two out-going edges,
one labeled with 0 and the other with 1.
Moreover, any path from the root to a terminal node crosses at most one node associated
to each component.
Then, each configuration $\vb x$ corresponds to a single path from the root to a terminal node such that  the edge emanating from a node associated with component $i$ is labeled 1 if and only if $\vb x_i=1$.
This characterization captures common variants of BDDs, including
reduced ordered BDDs~\cite{Wegener_2004}.
When function $\phi$ is given as a binary decision diagram,
SAT, TAUTOLOGY, and NOT-TAUTOLOGY are in P, as they can be solved in linear time by checking the
existence of paths from the root to a terminal node using graph traversal
algorithms~\cite{Wegener_2004}.

\paragraph{As double DNF (DNF01)}
Boolean function $\phi$ can be represented with two DNFs $\phi^0$ and $\phi^1$ of $n$
variables $\vb x_1,\ldots,\vb x_n$ such that $\phi^0$ is satisfied if and only if $\phi(\vb x) =
0$, and $\phi^1$ is satisfied if and only if $\phi(\vb x) = 1$.
This representation is typically employed in Petri nets~\cite{Chaouiya2010,CHJPS14-CMSB} and automata networks~\cite{gored-TCBB}.
In this case,
SAT, TAUTOLOGY, and NOT-TAUTOLOGY are in P from the complexity results on DNFs.

\vspace{2mm}

\begin{example}\label{example:boolean-function-representation}
	We consider propositional formula $f=\vb x_1 \wedge \neg (\vb x_2 \wedge \neg \vb x_3)$ and show its
	different representation schemes and related explanations. 
	\begin{itemize}
		\item $(\vb x_1 \wedge \neg \vb x_2) \vee (\vb x_1\wedge \vb x_3)$
		is an equivalent DNF representation of $f$.
		\item     
		$f$ is unate with   $\preceq$ being $\leq\geq\leq$. On the other hand, another Boolean formula $(\neg \vb x_1\wedge \vb x_2)\vee(\vb x_1\wedge\neg \vb x_2)$ is not unate.
		\item    
		The truth table representation of $f$ is
		$t = 00001101$, assuming that $\operatorname{bin}(1) = 001$. 
		\item     One of the double DNF representations of $f$ is $(\phi^0,\phi^1)$ with
		$\phi^0 = (\neg \vb x_1) \vee (\vb x_2 \wedge \neg \vb x_3)$ and
		$\phi^1 = (\vb x_1 \wedge \neg \vb x_2) \vee (\vb x_1\wedge \vb x_3)$;
		\item An equivalent BDD representation of $f$ is the graph in Figure \ref{fig:example-bdd}:
		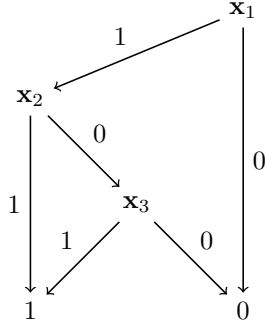
\begin{figure}
			\begin{center}
				\begin{tikzpicture}[->, node distance=2cm, semithick]
					\node (x3)  {${\vb x}_3$};
					\node (one) [below left of=x3]{$1$};
					\node (zero) [below right of=x3] {$0$}; 
					\node (dummy) [above of=zero] {};
					\node (x1) [above of=dummy] {${\vb x}_1$};
					\node (x2) [above left of=x3] {${\vb x}_2$};
					\path[->]
					(x1) edge node [above left] {1} (x2)
					edge node [right] {0} (zero)
					(x2) edge node [left] {1} (one)
					edge node [above right] {0} (x3)
					(x3) edge node [above left] {1} (one)
					edge node [above right] {0} (zero)
					;
				\end{tikzpicture}
				\caption{A binary decidion diagram representation of the Boolean network $f$ in Example \ref{example:boolean-function-representation}}
				\label{fig:example-bdd}
			\end{center}
		\end{figure}
	\end{itemize}
\end{example}

\paragraph{Boolean networks (BNs)}

Recall that a BN of dimension $n$ is  defined by a function $f:\B^n\to \B^n$ with its local (Boolean) function of the $i$-th component $f_i:\B^n\to \B$ for $i\in\range n$.
A BN is \emph{locally monotone} whenever each of its local functions is unate.
In that case, we assume that the orderings of components leading to their unate property are given.
The local Boolean functions of the BN can be encoded with any of the aforementioned representations.
In the case of truth tables, the dimension of the truth table a local function 
follows the number of components which it depends on. A function $f_i$ depends on component $j$ if there exists a configuration $\vb y\in\B^n$ such that $f_i(\vb y_{[1,j-1]} {\vb 0} \vb y_{[j+1,n]}) \neq  f_i(\vb y_{[1,j-1]} {\vb 1} \vb y_{[j+1,n]})$.
In practice, we can significantly reduce the dimension of $f_i$ from $n$ in the following
way. If $k$ is the number of components that $f_i$ depends on, we define its corresponding
integer vector $p\in \range n^k$ to list up the indices of such components in the BN. Then, a truth
table $\TT$ with $2^k$ rows can be constructed to satisfy $f_i(\vb x) = \TT_{\vb x_{p_1}\ldots \vb
	x_{p_k}}$ for any configuration $\vb x\in \B^n$ of the BN.
Finally, a BN can be represented     by its \emph{functional graph},  the digraph of the image by $f$. It is also known as the synchronous state
transition graph. The vertices of such a graph are all the configurations $\B^n$, and there is an edge from $\vb x$ to $\vb y$ if and only if $\vb y = f(\vb x)$. 

\vspace{2mm}

\begin{example} \label{example:locally-monotone} The BN $f:\B^3\to \B^3$ with 
	\begin{align*}
		f_1(\vb x) &= (\neg \vb x_1 \vee \neg \vb x_2) \wedge \vb x_3\\
		f_2(\vb x) &= \vb x_1 \wedge \vb x_3\\
		f_3(\vb x) &= \vb x_1 \vee \vb x_2 \vee \vb x_3
	\end{align*}
	is locally monotone since all its local functions are unate. The functional graph of $f$ is illustrated in Figure \ref{fig:mintrapspace}.
	\begin{figure*}[h]
		\centering
		\begin{tikzpicture}
			\draw[dotted]
			(0,0) node(101) {101}
			-- (3,0) node(111) {111}
			-- (2,1) node(011) {011}
			-- (1,1) node(001) {001}
			-- cycle;
			\draw[dotted]
			(0,3) node(100) {100}
			-- (1,2) node(000) {000}
			-- (2,2) node(010) {010}
			-- (3,3) node(110) {110}
			-- cycle;
			\draw[dotted] (100) -- (101);
			\draw[dotted] (000) -- (001);
			\draw[dotted] (010) -- (011);
			\draw[dotted] (110) -- (111);
			
			\draw[->,black!40] (100) -- (001);
			\draw[->,black!40] (010) -- (001);
			\draw[->,black!40] (001) -- (101);
			\draw[->,black!40] (011) -- (101);
			\draw[->,black!40] (101) -- (111);
			\draw[->,black!40] (111) -- (011);
			\draw[->,black!40] (110) ..  controls +(left:0.5) and +(up:1) ..  (001);
			\draw[->,black!40] (1,2.2) arc (-100:230:2mm);
			\draw[black] (-0.7,-0.3)-- (3.7,-0.3)  -- (2.3,1.3) -- (0.7,1.3) -- cycle;
			\draw[black] (0.7,1.7)-- (1.3,1.7)  -- (1.3,2.3) -- (0.7,2.3) -- cycle;
			\draw[->,black!40] (5, 2) -- (6, 2); 
			\draw[black] (5,0.7) -- (6,0.7)  -- (6,1.3) -- (5,1.3) -- cycle;
			\node[align=left] at (8.3,2) {the synchronous transition};
			\node[align=left] at (7.7,1) {minimal trap space};
		\end{tikzpicture}
		\caption{The functional graph and minimal trap spaces of the Boolean network $f$ in Example \ref{example:locally-monotone}}
		\label{fig:mintrapspace}
	\end{figure*}
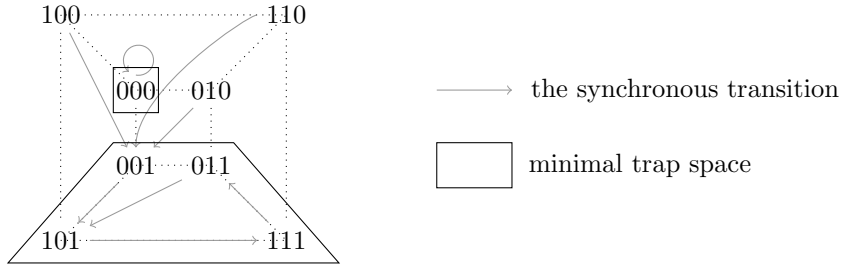
\end{example}
\subsection{Sub-hypercubes and minimal trap spaces of BNs} \label{subsec:notation_trap_space}

A sub-hypercube is an $n$-dimensional sub-graph of the $n$-hypercube such that some dimensions can be fixed to be singular values.
It can be represented as a vector $\vb h\in\C^n$, which specifies for each dimension
$i\in\range n$ whether it is at a fixed value (0 or 1), or free ($*$)  in the sub-hypercube.
The vertices of a sub-hypercube $\vb h$ are denoted by $\vertices(\vb h) := \{\mathbf{x}\in \mathbb{B}^n: \forall i\in [1,n],(\vb h_i \neq * )\Rightarrow (\vb x_i= \vb h_i)\}$.
A sub-hypercube $\vb h$ is smaller than a sub-hypercube space $\vb h'$ whenever 
$\vertices(\vb h)\subseteq\vertices(\vb h')$. We also write this condition as $\vb h\subseteq \vb h'$.

A trap space of a BN $f$ is a sub-hypercube $\vb h\in\C^n$ which is \emph{closed} by $f$, i.e.,
for each vertex $\vb x\in\vertices(\vb h), f(\vb x)\in\vertices(\vb h)$ implying the its image by $f$ is also a vertex of $\vb h$.
Remark that $\vb\ast^n$ is always a trap space.
A trap space $\vb h$ is \emph{minimal} if there is no different trap space $\vb h'\neq \vb h$
within itself; i.e., there exists no trap space $\vb h'$ such that $\vertices(\vb h')\subsetneq \vertices(\vb h)$.
We use $\T(\vb h$) to denote the minimal trap space that contains all configurations in
$\vertices({\vb h})$.
In other words, $\T(\vb h)$ must satisfy three properties:
\begin{itemize}
	\item $\T(\vb h)$ is a trap space, 
	\item $\vb h \subseteq \T(\vb h)$, 
	\item There exists no trap space $\vb h'$ such that  $\vb h \subseteq {\bf h}'
	\subsetneq {\T(\vb h)}$.
\end{itemize}
Remark that if $\vb h$ is a minimal trap space, then, for any configuration $\vb x\in\vertices(\vb
h)$, $\T(\vb x)=\vb h$.

\vspace{2mm}

\begin{example}
	The BN $f$ of {\em Example} \ref{example:locally-monotone} has a fixed point \{000\} and a cyclic attractor \{011, 101, 111\}. 
	See Figure \ref{fig:mintrapspace} for its functional graph representation and minimal trap spaces. 
	It has two minimal trap spaces: $000$ and $\ast\ast1$. 
	Moreover, $\T(010)=\T(01\ast)=\T(0\ast0)=\ast\ast\ast$.
\end{example}

\subsection{Upper bounds results to the computational complexity} \label{subsec:upper-bound_results}
We present all the upper bound results when local functions are given. All the polynomial time solvable cases in Table \ref{tab:results} are also discussed here, except the ones with a functional graph. We later present polynomial time algorithms for the remaining cases in Sec. \ref{sec:hardness-functional-graph}. The basic ideas and previous upper bound results are adopted from \cite{pauleve2020Reconciling}, yet with some extensions to the representations we are considering. All new results are summarized in Theorem \ref{thm:trapspace-tt}.

Consider NOT-TRAPSPACE(\(f,\bf h\)), the problem of deciding if the given hypercube \({\bf h}\) is
\emph{not} closed by \(f\):
it is equivalent to deciding if there exists component
\(i\in\range n\) with \({\bf h}_i\neq *\) and \({\bf z}\in v({\bf h})\) such that \(f_i({\bf z})\neq
{\bf h}_i\),
which boils down to SAT if ${\bf h}_i=0$ and NOT-TAUTOLOGY if ${\bf h}_i=1$.
Thus, the complementary problem TRAPSPACE(\(f,{\bf h}\)) is in coNP for the general case and in P for the locally monotone case. For the same reason, TRAPSPACE(\(f,{\bf h}\)) is in P when the local functions are given as truth tables, BDDs, or double DNFs.

Now, consider NOT-MINTRAP(\(f,\bf h\)), the problem of deciding if the hypercube \({\bf h}\) is either \textit{not} closed by \(f\) or is closed but \textit{not} minimal. 
It can be decided by first checking if \({\bf h}\) is a trap space and then checking the existence of another trap space \({\bf h}'\) which is strictly included in \({\bf h}\). 
This problem is at most NP$^{\text{TRAPSPACE}}$ because only the inclusion ${\bf h}'\subseteq {\bf h}$ needs to be decided  with an oracle for TRAPSPACE, and it can be done in a polynomial time.
Thus, the complementary problem MINTRAP(\(f,{\bf h}\)) is in coNP$^{\text{TRAPSPACE}}$,
which is at most coNP$^{\text{coNP}}=\Pi^{\text P}_2$ in the general case. 
MINTRAP(\(f,{\bf h}\)) is in coNP for the locally monotone case because TRAPSPACE can be solved in a polynomial time. For the same reason,  MINTRAP(\(f,{\bf h}\)) is in coNP when the local functions are represented as truth tables, BDDs, or double DNFs.

Finally, consider IN-MINTRAP$(f,\vb x)$ the problem of deciding whether the configuration $\vb x$ is
a vertex of a minimal trap space of $f$. It boils down to decide MINTRAP($f$, $\T(\vb x)$).
The computation of $\T(\vb x)$ can be performed using the following algorithm:
\begin{lstlisting}
	$\vb h := \vb x$
	repeat $n$ times:
	for each $i\in\range n$ with $\vb h_i\neq \ast$:
	if $\exists \vb y\in\vertices(\vb h)$ s.t. $f_i(\vb y) = 1- \vb y_i$:
	$\vb h_i := \ast$
	return $\vb h$
\end{lstlisting}
The procedure to check the existence in line 4 is equivalent to SAT if $\vb x_i=0$ and NOT-TAUTOLOGY if $\vb x_i=1$.
Thus, overall, this algorithm is in P$^{\text{NP}}$ in the general case, and in P 
for the locally monotone case. Analogously, $\T(\vb x)$ can be computed in a polynomial time when the local functions are represented as truth tables, BDDs, or double DNFs. For all cases, the computational complexity for computing $\T(\vb x)$ does not exceed that of MINTRAP.
Therefore, the computational complexity of IN-MINTRAP is up to the complexity of MINTRAP for each.
\begin{theorem} \label{thm:trapspace-tt}
	Given hypercube $\mathbf{h}$ and BN $f$ with its local functions represented as truth tables, BDDs, or double DNFs,   \emph{TRAPSPACE($f,\mathbf{h}$)}
	can be solved in a polynomial time.
\end{theorem}

\section{Results}\label{sec:main_results}
In this section, we demonstrate computational complexity results for the TRAPSPACE, MINTRAP and IN-MINTRAP
problems in BNs with different representations and the unate property.
In Sec. \ref{sec:hardness-prop-general}, we present the exact computational complexity for BNs with local functions given as propositional formulas, which is the most general case under our consideration. Results for the special case of locally monotone BNs are presented in Sec. \ref{sec:hardness-locally-monotone}.
Those results are used in Sec. \ref{sec:hardness-truth-table} to derive the computational complexity in the case of BNs with local functions represented with truth tables, binary decision diagrams, and double DNFs.
The computational complexity for the BNs given as a functional graph is presented in Sec. \ref{sec:hardness-functional-graph}.

\subsection{Local functions given as propositional formulas\label{sec:hardness-prop-general}}
Theorem \ref{thm:trapspace} demonstrates that TRAPSPACE is coNP-hard when local functions are represented as general propositional formulas, which is the lower bound to the computational complexity. Combined with the previous upper bound results of \cite{pauleve2020Reconciling}, the completeness is shown.

\begin{theorem} \label{thm:trapspace}
	Given hypercube $\mathbf{h}$ and BN $f$ with its local functions represented as propositional formulas,   \emph{TRAPSPACE($f,\mathbf{h}$)}
	is coNP-hard.
\end{theorem}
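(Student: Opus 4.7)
}
The plan is to reduce from TAUTOLOGY, which is coNP-complete when the input formula is an arbitrary propositional formula. Given an instance $\phi$ on variables $\vb x_1, \ldots, \vb x_n$, I would construct an instance of TRAPSPACE as follows. Let $f : \B^{n+1} \to \B^{n+1}$ be the BN whose local functions, written directly as propositional formulas, are
\begin{align*}
f_i(\vb x) &= \vb x_i \quad \text{for } i \in \range n,\\
f_{n+1}(\vb x) &= \phi(\vb x_{[1,n]}),
\end{align*}
and let $\vb h \in \C^{n+1}$ be the sub-hypercube with $\vb h_i = \ast$ for $i \in \range n$ and $\vb h_{n+1} = 1$. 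Each $f_i$ for $i \leq n$ is the atomic formula $\vb x_i$, and $f_{n+1}$ is a syntactic copy of $\phi$, so the total encoding size is $O(|\phi| + n)$ and the reduction is clearly computable in polynomial time.

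The key step is to argue the equivalence $\vb h$ is a trap space of $f$ iff $\phi$ is a tautology. By definition, $\vertices(\vb h) = \{\vb x \in \B^{n+1} : \vb x_{n+1} = 1\}$. For $\vb h$ to be closed under $f$, the only component of $f(\vb x)$ constrained by $\vb h$ is the last one, for which we require $f_{n+1}(\vb x) = 1$; the first $n$ components of $f(\vb x)$ are unconstrained since $\vb h_i = \ast$ for $i \leq n$. Because $f_{n+1}(\vb x)$ depends only on $\vb x_{[1,n]}$, the closure condition becomes $\phi(\vb y) = 1$ for every $\vb y \in \B^n$, which is exactly the statement that $\phi$ is a tautology. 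This yields a polynomial-time many-one reduction from TAUTOLOGY to TRAPSPACE, establishing coNP-hardness.

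There is essentially no obstacle here: the construction deliberately confines the ``hard'' check to a single coordinate whose fixed value $1$ forces the satisfaction of $\phi$ on every configuration in $\vertices(\vb h)$, while the identity local functions on the other coordinates trivially preserve closure. The only point that needs a brief remark is that the reduction produces a legal TRAPSPACE input (a BN with local functions given as propositional formulas together with a sub-hypercube in $\C^{n+1}$), which follows immediately from the definition. The same skeleton will be reused, with the auxiliary-variable gadgets of Observations~\ref{obs:expansion_negation}--\ref{obs:vars_expansion_arbitrary}, to handle the harder MINTRAP and IN-MINTRAP problems in the subsequent theorems.
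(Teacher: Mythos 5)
Your proposal is correct and is essentially the paper's own reduction: TAUTOLOGY is encoded by adding a component $f_{n+1}=\phi(\vb x_{[1,n]})$ and taking the hypercube $\ast^{n}1$, so closure holds iff $\phi$ is a tautology. The only difference is that you use identity local functions $f_i(\vb x)=\vb x_i$ where the paper uses $\neg\vb x_i$, which is immaterial here since those coordinates are free in $\vb h$ and never constrain closure.
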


\begin{proof}
	Consider a Boolean function  $\phi: \mathbb{B}^{n_1} \rightarrow \mathbb{B}$ for $n_1\in \mathbb{Z}^+$ and the associated TAUTOLOGY problem of deciding
	$\forall \mathbf{y}\: \phi(\mathbf{y})=1$, which is  coNP-complete. 
	We construct BN $f: \mathbb{B}^{n_1+1}\rightarrow \mathbb{B}^{n_1+1}$ as
	\begin{align*}
		\forall i\in [1,n_1], \quad
		f_i(\vb x) &= \neg \vb x_i  \\
		f_{n_1+1}(\vb x) & = \phi(\vb x_{[1,n_1]})
	\end{align*}
	and hypercube
	$\mathbf{h}= *^{n_1}{\bf 1}$.
	We prove this theorem by showing that the TAUTOLOGY problem is true if and only if TRAPSPACE($f,\mathbf{h}$) is true.
	If TRAPSPACE($f,\mathbf{h}$) 
	is true, then
	$\forall \mathbf{z}\in   v(\mathbf{h}),  \phi(\vb z_{[1,n_1]})=1$.
	Since $\vb z_{[1,n_1]}$ can have an arbitrary configuration in $\mathbb{B}^{n_1}$, $\forall \mathbf{y} \ \phi(\mathbf{y})$ must be true. 
	On the other hand, if TRAPSPACE($f,\mathbf{h}$)
	is false,  
	we can find configuration ${\vb z} \in \mathbb{B}^{n_1}$ that satisfies $f_{n_1+1}(\vb z{\bf 1})= \phi(\vb z)=0$. 
	This can be used as a certificate that $\forall \mathbf{y}\,   \phi(\mathbf{y})$ is not true, and it can be verified in a polynomial time. Hence, the theorem  holds. 
\end{proof}

Theorem \ref{thm:mintrap-general} and Theorem \ref{thm:inmintrap-general} demonstrate that MINTRAP and IN-MINTRAP are $\mathrm{\Pi}^\text{P}_2$-hard, respectively. 
Combined with the previous upper bound results in \cite{pauleve2020Reconciling}, their completeness is shown.
Our proofs show that MINTRAP and IN-MINTRAP can be used to solve $\mathrm{\Pi}^\text{P}_2$ SAT based on several tricks. First, a component ${\vb x}_i$ with its local function $f_i=\neg {\vb x}_i$ always becomes free in a minimal trap space; see Remark \ref{obs:expansion_negation}. We use this trick to encode Boolean variables quantified with $\forall$ to the BN we construct. Second, given a Boolean formula $\phi$ to be proven its satisfiability, we employ two auxiliary components that have a full control to override other local functions as either 0 or 1 whenever $\phi$ is true. We use this trick to construct a BN that has the full dimensional hypercube as its unique minimal trap space if and only if $\phi$ is satisfied. Note that those auxiliary components will be always presented as the last two components of the BN we construct; see Remarks \ref{obs:auxiliary_vars_expansion}--\ref{obs:vars_expansion_arbitrary} for details.

\begin{remark} \label{obs:expansion_negation}
	Let $f:{\mathbb B}^n \rightarrow {\mathbb B}^{n}$ with $n\in \mathbb{Z}^+$ be a BN. Given $\mathcal{I} \subseteq [1,n]$, suppose $f_{i}(\vb x) = \neg \vb x_{i}$ for $\forall i\in [1,n]$. 
	Then, any hypercube $\mathbf{h} \in \{0,1,*\}^{n}$ must satisfy   ${\bf T(h)}_{i}=*$ for all $i\in \mathcal{I}$.
\end{remark}
\begin{proof}
	For all $i\in \mathcal{I}$,  component $\vb x_i$ can be updated to $\neg \vb x_i$ and realized as both 0 and 1. Therefore, ${\bf T(h)}_{i}=*$ for all $i\in \mathcal{I}$ to ensure that ${\bf T(h)}$ is closed by $f$.
\end{proof}

\begin{remark} \label{obs:auxiliary_vars_expansion}
	For a given Boolean function $\phi: {\mathbb B}^{n} \rightarrow {\mathbb B}$ with $n\in \mathbb{Z}^+$, let $f:{\mathbb B}^{n+2} \rightarrow {\mathbb B}^{n+2}$  be a BN satisfying $\begin{dcases}
		f_{n+1}(\vb x) &=  \phi({\vb x}_{[1,n]}) \wedge \neg \vb x_{n+2}\\
		f_{n+2}(\vb x) &= \vb x_{n+1} \wedge  \neg \vb x_{n+2}
	\end{dcases}$. Suppose hypercube $\mathbf{h} \in \{0,1,*\}^{n+2}$ contains ${\vb z} \in \vertices (\vb h)$ that satisfies $\phi({\vb z}_{[1,n]})=1$. Then, ${\bf T(h)}_{n+1}={\bf T(h)}_{n+2}=*$.
\end{remark}
\begin{proof}
	Consider the transitions of ${\vb z}$ under the fully asynchronous update mode given as follows:
	$$
	{\vb z}_{[1,n]}{\bf 00} \xrightarrow[]{f_{n+1}}
	{\vb z}_{[1,n]}{\bf 10} \xrightarrow[]{f_{n+2}}
	{\vb z}_{[1,n]}{\bf 11} \xrightarrow[]{f_{n+1}}
	{\vb z}_{[1,n]}{\bf 01} \xrightarrow[]{f_{n+2}}
	{\vb z}_{[1,n]}{\bf 00} \xrightarrow[]{f_{n+1}} 
	\cdots,
	$$
	where the label of an arrow corresponds to the local function used to update a configuration. Given any value of ${\vb z}_{[n+1, n+2]}$, 
	components ${\vb x}_{n+1}$ and ${\vb x}_{n+2}$  are  evaluated to be both 0 and 1 during the transition.
	Therefore, ${\bf T(h)}_{n+1}={\bf T(h)}_{n+2}=*$ to ensure that ${\bf T(h)}$ is closed by $f$.
\end{proof}
\begin{remark} 
	\label{obs:vars_expansion_arbitrary}
	Let $f:{\mathbb B}^{n+2} \rightarrow {\mathbb B}^{n+2}$ with $n\in \mathbb{Z}^+$ be a BN. For some $i\in [1,n]$ and a given Boolean function $\phi_i:{\mathbb B}^{n} \rightarrow {\mathbb B}$, suppose the local function $f_{i}(\vb x)$ is in the form of
	$(\phi_i({\vb x}_{[1,n]}) \wedge \neg \vb x_{n+1}) \vee \vb x_{n+2}$. 
	If hypercube $\mathbf{h} \in \{0,1,*\}^{n+2}$ satisfies ${\bf T(h)}_{n+1}={\bf T(h)}_{n+2}=*$,  then ${\bf T(h)}_{i} = *$.
\end{remark}
\begin{proof}
	Since ${\bf T(h)}_{n+1}={\bf T(h)}_{n+2}=*$, there exists a configuration $\vb z \in \vertices({\bf T(h)})$ such that $({\vb z}_{n+1},{\vb z}_{n+2}) = (1,0)$, which can be evaluated as $f_{i}(\vb z)=0$. In addition, there exists another configuration $\vb z' \in \vertices({\bf T(h)})$ such that ${\vb z}'_{n+2} = 1$, which can be evaluated as $f_{i}(\vb z)=1$. Hence, the image of $f_i$ can be both 0 and 1, implying ${\bf T(h)}_{i} = *$ to ensure that ${\bf T(h)}$ is closed by $f$.
\end{proof}

\begin{lemma} \label{lem:propositional_reduction}
	Consider $n_1,n_2\in \mathbb{Z}^+$ with $n_1\leq n_2$ and a Boolean function $\phi: \mathbb{B}^{n_2} \rightarrow \mathbb{B}$ given as a propositional formula. Boolean formula $\forall  \mathbf{y}_{[1,n_1]}
	\exists \mathbf{y}_{[n_1+1,n_2]}
	\phi(\mathbf{y})$ is true if and only if BN $f: \mathbb{B}^{n_2+2} \rightarrow \mathbb{B}^{n_2+2}$ with the local functions defined by \eqref{eq:reduction_propositional_formula_1}--\eqref{eq:reduction_propositional_formula_4} has the unique trap space $*^{n_2+2}$.
	\begin{align}
		\forall j\in [1,n_1], \quad
		f_j(\vb x) &= ({\vb x}_{j} \wedge \neg {\vb x}_{n_2+1}) \vee {\vb x}_{n_2+2}  \label{eq:reduction_propositional_formula_1}
		\\
		\forall j \in [n_1+1, n_2], \quad
		f_{j}(\vb x) &= \neg \vb x_{j}
		\\
		f_{n_2+1}(\vb x) &=  \phi(\vb x_{[1, n_2]}) \wedge \neg \vb x_{n_2+2}
		\\
		f_{n_2+2}(\vb x) &= \vb x_{n_2+1} \wedge  \neg \vb x_{n_2+2} \label{eq:reduction_propositional_formula_4}
	\end{align}
\end{lemma}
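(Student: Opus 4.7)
The plan is to prove the biconditional in both directions, leaning on Observations~\ref{obs:expansion_negation}--\ref{obs:vars_expansion_arbitrary} for the forward implication and constructing an explicit witness sub-hypercube for the backward implication. Since $*^{n_2+2}$ is always a trap space of any BN, uniqueness amounts to showing that no strict sub-hypercube of $*^{n_2+2}$ is closed under $f$.

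For the forward direction (QBF true $\Rightarrow$ only $*^{n_2+2}$ is a trap space), I would let $\vb h$ be any trap space of $f$ and show $\vb h = *^{n_2+2}$ in three stages, each using one observation. Since $\vb h$ is itself a trap space, $\T(\vb h) = \vb h$, so it suffices to establish $\T(\vb h)_i = *$ for every $i$. First, I would apply Observation~\ref{obs:expansion_negation} with $\mathcal{I} = [n_1+1, n_2]$ (since $f_j(\vb x) = \neg \vb x_j$ on that range) to conclude $\vb h_j = *$ for $j \in [n_1+1, n_2]$. Second, I would use the QBF to find a witness: pick any $\vb y \in \vertices(\vb h_{[1, n_1]})$, use the hypothesis to obtain $\vb y' \in \mathbb{B}^{n_2-n_1}$ with $\phi(\vb y, \vb y') = 1$, and extend to $\vb z \in \vertices(\vb h)$ by filling the middle coordinates with $\vb y'$ (legal thanks to Step~1) and the last two coordinates with any values consistent with $\vb h$. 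Observation~\ref{obs:auxiliary_vars_expansion} then yields $\vb h_{n_2+1} = \vb h_{n_2+2} = *$. Third, the local functions $f_i$ for $i \in [1, n_1]$ have exactly the form required by Observation~\ref{obs:vars_expansion_arbitrary} (take $\phi'(\vb x_{[1, n_2]}) = \vb x_i$), so $\vb h_i = *$ for those indices as well. Combining the three stages gives $\vb h = *^{n_2+2}$.

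For the backward direction (QBF false $\Rightarrow$ another trap space exists), I would fix a witness $\vb y^* \in \mathbb{B}^{n_1}$ such that $\phi(\vb y^*, \vb y') = 0$ for every $\vb y' \in \mathbb{B}^{n_2-n_1}$, and verify that $\vb h := \vb y^* *^{n_2-n_1} 00$ is a trap space by direct evaluation on an arbitrary $\vb z \in \vertices(\vb h)$: for $j \in [1, n_1]$, $f_j(\vb z) = (\vb y^*_j \wedge 1) \vee 0 = \vb y^*_j$; for $j \in [n_1+1, n_2]$ the component is unrestricted; $f_{n_2+1}(\vb z) = \phi(\vb y^*, \vb z_{[n_1+1, n_2]}) \wedge 1 = 0 = \vb h_{n_2+1}$; and $f_{n_2+2}(\vb z) = 0 \wedge 1 = 0 = \vb h_{n_2+2}$. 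Since $\vb h \neq *^{n_2+2}$, uniqueness fails.

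The hard part will be the witness construction in Step~2 of the forward direction: exhibiting $\vb z \in \vertices(\vb h)$ with $\phi(\vb z_{[1, n_2]}) = 1$ requires that the middle coordinates range freely over $\mathbb{B}^{n_2-n_1}$ (guaranteed by Step~1) and that any fixed values $\vb h$ might place on the auxiliary coordinates $n_2+1, n_2+2$ do not obstruct membership. Since Observation~\ref{obs:auxiliary_vars_expansion} requires only the existence of a single such vertex, and the auxiliary coordinates do not appear in $\phi$, this goes through regardless of how restrictive $\vb h$ is on them.
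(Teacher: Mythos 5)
Your proof is correct and follows essentially the same route as the paper: the forward direction frees the negation components, the two auxiliary components, and the first $n_1$ components via Observations~\ref{obs:expansion_negation}, \ref{obs:auxiliary_vars_expansion}, and \ref{obs:vars_expansion_arbitrary} in that order, and the backward direction exhibits the same smaller trap space $\vb y^* *^{n_2-n_1}00$. The only differences are presentational (you start from an arbitrary trap space and use $\T(\vb h)=\vb h$ rather than chaining $\T$-inclusions over arbitrary hypercubes, and you verify the witness trap space explicitly, which the paper only asserts).
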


\begin{proof}
	If $\forall \vb y_{[1,n_1]} \exists \vb y_{[n_1+1,n_2]} \phi(\vb y)$ is true,  any hypercube $\vb h\in {\{0,1,*\}}^{n_2+2}$ satisfies %
	\begin{align*}
		\mathbf{T}(\vb h) \supseteq &  \mathbf{T}(\vb h_{[1,n_1]} *^{n_2-n_1} \vb h_{[n_2+1,n_2+2]}) && \because Remark~\ref{obs:expansion_negation} \\
		\supseteq &  \mathbf{T}(\vb h_{[1,n_1]} *^{n_2-n_1+2}) && \because Remark~\ref{obs:auxiliary_vars_expansion} \\
		\supseteq & \mathbf{T}(*^{n_2+2}) && \because Remark~\ref{obs:vars_expansion_arbitrary} \\
		= & *^{n_2+2}
	\end{align*}
	Therefore, $*^{n_2+2}$ is the unique (and thus minimal) trap space.
	For the remaining case where   $\exists \vb y_{[1,n_1]} \forall \vb y_{[n_1+1,n_2]} \neg \phi(\vb y)$, 
	$*^{n_2+2}$ is not a minimal trap space because a smaller trap space $\vb h' = \vb y_{[1,n_1]}*^{n_2-n_1}{\bf 0}^2$ exists. This completes the proof.
\end{proof}

\begin{theorem} \label{thm:mintrap-general}
	Given hypercube $\mathbf{h}$ and BN $f$ with its local functions represented as
	propositional formulas, 
	{\rm MINTRAP($f, \mathbf{h}$)} is $\mathrm{\Pi}^\emph{P}_2$-hard.
\end{theorem}
\begin{proof}
	
	Given $n_1,n_2\in \mathbb{Z}^+$ with $n_1\leq n_2$ and a Boolean function $\phi: \mathbb{B}^{n_2} \rightarrow \mathbb{B}$, consider the associated $\Pi_2$SAT problem that deciding whether 
	$\forall  \mathbf{y}_{[1,n_1]}
	\exists \mathbf{y}_{[n_1+1,n_2]}
	\phi(\mathbf{y})$ is true, which is  $\mathrm{\Pi}^\text{P}_2$-complete. 
	By Lemma \ref{lem:propositional_reduction}, this $\Pi_2$SAT problem is true if and only if MINTRAP$\left(f,*^{n_2+2}\right)$ is true for $f$ defined by \eqref{eq:reduction_propositional_formula_1}--\eqref{eq:reduction_propositional_formula_4}. Hence, the theorem holds.
\end{proof}
\begin{theorem} \label{thm:inmintrap-general}
	Given configuration ${\vb x}$ and BN $f$ with its local functions represented as propositional formulas, {\rm IN-MINTRAP($f, {\vb x}$)} is $\mathrm{\Pi}^\emph{P}_2\emph{-hard}$.
\end{theorem}
\begin{proof}
	Given $n_1,n_2\in \mathbb{Z}^+$ with $n_1\leq n_2$ and a Boolean function $\phi: \mathbb{B}^{n_2} \rightarrow \mathbb{B}$, consider the associated $\Pi_2$SAT problem that deciding whether 
	$\forall  \mathbf{y}_{[1,n_1]}
	\exists \mathbf{y}_{[n_1+1,n_2]}
	\phi(\mathbf{y})$ is true, which is  $\mathrm{\Pi}^\text{P}_2$-complete. 
	We prove the theorem by showing that the $\Pi_2$SAT is true if and only if IN-MINTRAP($f,\vb 1^{n_2+2}$) is true for $f$ defined by \eqref{eq:reduction_propositional_formula_1}--\eqref{eq:reduction_propositional_formula_4}.
	
	If $\forall  \mathbf{y}_{[1,n_1]}
	\exists \mathbf{y}_{[n_1+1,n_2]}
	\phi(\mathbf{y})$,  $*^{n_2+2}$ is the unique minimal trap space by Lemma \ref{lem:propositional_reduction} and thus $\vb 1^{n_2+2}$ belongs to a minimal trap space.
	For the remaining case where   $\exists  \mathbf{y}_{[1,n_1]}
	\forall \mathbf{y}_{[n_1+1,n_2]}
	\, \neg\phi(\mathbf{y})$,  we have 
	\begin{align*}
		\mathbf{T}(\vb 1^{n_2+2}) \supseteq  & \mathbf{T}(\vb 1^{n_1} *^{n_2-n_1} \vb 1^{2}) && \because  Remark~\ref{obs:expansion_negation} \\
		\supseteq  &\mathbf{T}(\vb 1^{n_1} *^{n_2-n_1+2}) && \because \vb x_{n_2+2}=1 \\
		\supseteq & \mathbf{T}(*^{n_2+2})  && \because Remark~\ref{obs:vars_expansion_arbitrary} \\
		= & *^{n_2+2}.
	\end{align*}
	However, $\vb *^{n_2+2}$ is not  a minimal trap space because there is a smaller trap space $\vb h' = \vb y_{[1,n_1]}*^{n_2-n_1}\mathbf{0}^2$. This completes the proof.
\end{proof}

\subsection{Locally-monotone BNs with local functions given as propositional formulas}
\label{sec:hardness-locally-monotone}

We show a polynomial-time encoding of any DNF as a BN such that the TAUTOLOGY problem
reduces to MINTRAP and IN-MINTRAP problems. The proofs are given in Theorem \ref{thm:mintrap_locally_monotone} and \ref{thm:inmintrap_locally_monotone}, respectively.
Let us consider any Boolean function $\phi: \mathbb B^n\to \mathbb B$ represented in DNF as a list of
$k$ conjunctive clauses.
For $j\in [1,k]$, we use $c_j(\vb y)$ to denote the $j$-th clause of $\phi$ evaluated with $\vb y \in \mathbb{B}^n$.
Whenever $k=0$, $\phi$ is considered to be false. Whenever a clause is empty it is
equivalent to be true. 
We can assume that each clause $c_j(\vb y)$ does not contain a contradiction caused by the same component (e.g., $\vb y_i\land \neg \vb y_i$). Therefore, all clauses are unate.

\begin{lemma} \label{lem:locally-monotone}
	Let us consider $n, k\in \mathbb{Z}^+$ and a Boolean function $\phi: \mathbb{B}^n \rightarrow \mathbb{B}$ given as a DNF with $k$ conjunctive clauses; i.e., $\phi(\vb y):= \bigvee_{j=1}^{k} c_j(\vb y)$. Boolean formula $\forall \vb y \phi(\vb y)$ is true if and only if BN $f: \mathbb B^{n+k+2}\to \mathbb B^{n+k+2}$ with the local functions defined by \eqref{eq:reduction_locally_monotone_1}--\eqref{eq:reduction_locally_monotone_4} has the unique minimal trap space $*^{n+k+2}$.
	\begin{align}
		\forall i\in \range n,\quad
		f_i(\vb x) &= (\vb x_i \wedge \neg \vb x_{n+k+1}) \vee \vb x_{n+k+2} \label{eq:reduction_locally_monotone_1}
		\\
		\forall j\in \range k,\quad
		f_{n+j}(\vb x) &= (c_j(\vb x_{[1,n]}) \wedge \neg \vb x_{n+k+1}) \vee \vb x_{n+k+2} \label{eq:reduction_locally_monotone_2}
		\\
		f_{n+k+1}(\vb x) &= \left(\textstyle\bigvee_{j=1}^k \vb x_{n+j}\right) \wedge \neg \vb x_{n+k+2} \label{eq:reduction_locally_monotone_3}
		\\
		f_{n+k+2}(\vb x) &= \vb x_{n+k+1}\wedge \neg \vb x_{n+k+2} \label{eq:reduction_locally_monotone_4}
	\end{align}
\end{lemma}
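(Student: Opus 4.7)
My plan follows the strategy of Lemma~\ref{lem:propositional_reduction}: I will check that the local functions \eqref{eq:reduction_locally_monotone_1}--\eqref{eq:reduction_locally_monotone_4} are built so that components $\vb x_{n+k+1}$ and $\vb x_{n+k+2}$ fit the template of Observation~\ref{obs:auxiliary_vars_expansion} with its ``$\phi$'' instantiated as $\psi(\vb x):=\bigvee_{j=1}^{k}\vb x_{n+j}$, while every component $\vb x_i$ with $i\in[1,n+k]$ fits the template of Observation~\ref{obs:vars_expansion_arbitrary} (with $\phi'=\vb x_i$ when $i\le n$, and $\phi'=c_j$ when $i=n+j$). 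This pattern matching turns the argument into a Lemma~\ref{lem:propositional_reduction}-style chain of inclusions $\mathbf{T}(\vb h)\supseteq \ldots \supseteq *^{n+k+2}$.

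For the forward direction I assume $\forall \vb y\,\phi(\vb y)=1$ and show that every sub-hypercube $\vb h$ satisfies $\mathbf{T}(\vb h)=*^{n+k+2}$, so $*^{n+k+2}$ is the unique (hence unique minimal) trap space. The core step is to exhibit a configuration $\vb z'\in \vertices(\mathbf{T}(\vb h))$ with $\psi(\vb z')=1$. Starting from an arbitrary $\vb z\in \vertices(\vb h)$ and using closure of $\mathbf{T}(\vb h)$ under $f$, I would split on $(\vb z_{n+k+1},\vb z_{n+k+2})$: when $\vb z_{n+k+2}=1$, the disjunction with $\vb x_{n+k+2}$ in \eqref{eq:reduction_locally_monotone_2} forces $f_{n+j}(\vb z)=1$ for every $j$; when $(\vb z_{n+k+1},\vb z_{n+k+2})=(0,0)$, one has $f_{n+j}(\vb z)=c_j(\vb z_{[1,n]})$ and the TAUTOLOGY hypothesis supplies some satisfied clause $c_{j^\ast}(\vb z_{[1,n]})=1$; when $(\vb z_{n+k+1},\vb z_{n+k+2})=(1,0)$, one synchronous step drives $\vb x_{n+k+2}$ to $1$ via $f_{n+k+2}$ and a second step reduces to the first case. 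Observation~\ref{obs:auxiliary_vars_expansion} then gives $\mathbf{T}(\vb h)_{n+k+1}=\mathbf{T}(\vb h)_{n+k+2}=*$, and applying Observation~\ref{obs:vars_expansion_arbitrary} componentwise for each $i\in[1,n+k]$ concludes $\mathbf{T}(\vb h)=*^{n+k+2}$.

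For the backward direction assume there is $\vb y^\ast\in\mathbb{B}^n$ with $\phi(\vb y^\ast)=0$, so every clause satisfies $c_j(\vb y^\ast)=0$. Direct substitution into \eqref{eq:reduction_locally_monotone_1}--\eqref{eq:reduction_locally_monotone_4} shows that $\vb h^\ast:=\vb y^\ast\,0^{k+2}$ is a fixed point of $f$, hence a trap space strictly contained in $*^{n+k+2}$, so the full cube fails to be minimal. I expect the main technical obstacle to be the configuration hunt in the forward direction: Observation~\ref{obs:auxiliary_vars_expansion} cannot be invoked until some vertex with $\psi=1$ is produced, and the sub-case $(\vb z_{n+k+1},\vb z_{n+k+2})=(1,0)$ genuinely requires two synchronous iterations of $f$ rather than one, so care is needed to keep that case analysis airtight and to verify that the auxiliary $\vb x_{n+j}$ components, on which $\psi$ depends, remain frozen during the asynchronous ``cycle'' argument underlying Observation~\ref{obs:auxiliary_vars_expansion}.
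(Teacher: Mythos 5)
Your proposal is correct and follows essentially the same route as the paper: the converse direction uses the same fixed point $\vb y^\ast\,\vb 0^{k+2}$, and the forward direction establishes $\T(\vb h)=\ast^{n+k+2}$ for every $\vb h$ via Observations~\ref{obs:auxiliary_vars_expansion} and~\ref{obs:vars_expansion_arbitrary}, just as the paper does. The only cosmetic difference is that you case-split on the values $(\vb z_{n+k+1},\vb z_{n+k+2})$ of a vertex and iterate $f$ once or twice to reach a vertex with some $\vb x_{n+j}=1$, whereas the paper splits on whether $\vb h_{n+k+1}=0$ or $\vb h_{n+k+1}\in\{1,\ast\}$ and writes the same reasoning as a chain of inclusions of sub-hypercubes; both arguments are sound and interchangeable.
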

\begin{proof}
	If $\forall \vb y \, \phi(\vb y)$ is true,  any hypercube $\vb h\in {\{0,1,*\}}^{n+k+2}$ satisfies $\mathbf{T}(\vb h)=*^{n+k+2}$ by the  \textbf{Case (i)} and \textbf{Case (ii)}. 
	\begin{description}
		\item[Case (i)]: When  $\vb h_{n+k+1}=0$,~\\
		Eq. \eqref{eq:reduction_locally_monotone_2} can be simplified to $c_j(\vb x_{[1,n]}) \vee \vb x_{n+k+2}$.
		For an arbitrary element $\vb z \in v(\vb h)$, we can find $j^*\in [1,k]$ such that $c_{j^*}({\vb x}_{[1,n]})=1$ since $\forall \vb y \, \phi(\vb y)$ is true. Consequently,
		\begin{align*}
			\mathbf{T}(\vb h) \supseteq &  \mathbf{T}(\vb h_{[1,n+j^*-1]} \vb 1 \vb h_{[n+j^*+1,n+k+2]}) && \because {\vb x}_{n+j^*} \text{ can be evaluated to be 1}\\
			\supseteq &  \mathbf{T}(\vb h_{[1,n+j^*-1]} \vb 1 \vb h_{[n+j^*+1,n+k]}*^{2}) && \because Remark~\ref{obs:auxiliary_vars_expansion} \\
			\supseteq & \mathbf{T}(*^{n+k+2}) && \because Remark~\ref{obs:vars_expansion_arbitrary} \\
			= & *^{n+k+2}.
		\end{align*}
		\item[Case (ii)]: When  $\vb h_{n+k+1}\in \{1,*\}$,~\\
		Eq. \eqref{eq:reduction_locally_monotone_4} simplifies to $\neg {\vb x}_{n+k+2}$. Consequently,
		\begin{align*}
			\mathbf{T}(\vb h) \supseteq &  \mathbf{T}(\vb h_{[1,n+k+1]} *) && \because Remark~\ref{obs:expansion_negation} \\
			\supseteq &  \mathbf{T}(\vb h_{[1,n]} {\vb 1}^k \vb h_{n+k+1} *) && \because {\vb x}_{n+k+2} \text{ can be evaluated to be 1}\\
			\supseteq &  \mathbf{T}(\vb h_{[1,n]} {\vb 1}^k *^{2}) && \because Remark~\ref{obs:auxiliary_vars_expansion}\\
			\supseteq & \mathbf{T}(*^{n+k+2}) && \because Remark~\ref{obs:vars_expansion_arbitrary} \\
			= & *^{n+k+2}.
		\end{align*}
	\end{description}
	Therefore, $*^{n+k+2}$ is the unique minimal trap space if $\forall \vb y \, \phi(\vb y)$ is true. 
	On the other hand, if $\exists \vb y \, \neg \phi(\vb y)$ is true, then $*^{n+k+2}$ is not a minimal trap space because there is a smaller trap space $\vb h' = \vb y \mathbf{0}^{k+2}$. Hence the lemma holds.
\end{proof}

\begin{theorem} \label{thm:mintrap_locally_monotone}
	Given hypercube $\mathbf{h}$ and locally-monotone BN $f$ with local functions represented as propositional formulas, {\rm MINTRAP($f, \mathbf{h}$)} is coNP-hard.
\end{theorem}
\begin{proof}
	Given $n\in \mathbb{Z}^+$ and a Boolean function $\phi: \mathbb{B}^n \rightarrow \mathbb{B}$ in a DNF with $k$ conjunctive clauses;  i.e., $\phi(\vb y):=\bigvee_{j=1}^{k} c_j(\vb y)$,  consider the associated TAUTOLOGY problem  $\forall \vb y \, \phi(\vb y)$, which is coNP-complete.  By Lemma \ref{lem:locally-monotone}, the TAUTOLOGY problem is true if and only if MINTRAP($f$,$*^{n+k+2}$) is true for $f$ 
	defined by \eqref{eq:reduction_locally_monotone_1}--\eqref{eq:reduction_locally_monotone_4}. Since all local functions are unate, the theorem holds.
\end{proof}

\begin{theorem} \label{thm:inmintrap_locally_monotone}
	Given configuration $\vb x$ and locally-monotone BN $f$ with local functions represented as propositional formulas, {\rm IN-MINTRAP($f, \vb x$)} is coNP-hard.
\end{theorem}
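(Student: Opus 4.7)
The plan is to follow the template of the analogous IN-MINTRAP theorem for the general propositional case, but starting from the unate reduction of Lemma~\ref{lem:locally-monotone}. Given a DNF $\phi(\vb y)=\bigvee_{j=1}^k c_j(\vb y)$ on $n$ variables, I build exactly the same BN $f:\B^{n+k+2}\to\B^{n+k+2}$ specified by \eqref{eq:reduction_locally_monotone_1}--\eqref{eq:reduction_locally_monotone_4}, whose local functions are all unate as already noted in the proof of Theorem~\ref{thm:mintrap_locally_monotone}. The candidate configuration is $\vb z := \vb 1^{n+k+2}$. Since TAUTOLOGY on DNFs is coNP-complete, it suffices to show that $\forall \vb y\,\phi(\vb y)$ holds if and only if IN-MINTRAP$(f, \vb 1^{n+k+2})$ holds.

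The crucial step is to verify that $\T(\vb 1^{n+k+2}) = *^{n+k+2}$ \emph{independently} of $\phi$. First, $\vb 1^{n+k+2}$ is a vertex at which $\bigvee_{j=1}^k \vb x_{n+j}$ evaluates to $1$, so Observation~\ref{obs:auxiliary_vars_expansion} (applied with the observation's $\phi$ instantiated as $\bigvee_{j=1}^k \vb x_{n+j}$ and the two auxiliary components $n+k+1,n+k+2$ playing the role of $n+1,n+2$) gives $\T(\vb 1^{n+k+2})_{n+k+1} = \T(\vb 1^{n+k+2})_{n+k+2} = *$. Then each of the remaining $n+k$ local functions fits the template $(\cdot \wedge \neg \vb x_{n+k+1}) \vee \vb x_{n+k+2}$ with the ``$\cdot$'' depending only on $\vb x_{[1,n+k]}$, so a direct generalization of Observation~\ref{obs:vars_expansion_arbitrary} frees every remaining coordinate, yielding $\T(\vb 1^{n+k+2}) = *^{n+k+2}$.

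To close the reduction, I invoke Lemma~\ref{lem:locally-monotone}. If $\forall \vb y\,\phi(\vb y)$ holds, then $*^{n+k+2}$ is the unique minimal trap space and, since $\vb 1^{n+k+2}$ trivially belongs to it, IN-MINTRAP holds. Otherwise, some $\vb y$ satisfies $\neg\phi(\vb y)$, so $\vb y\mathbf{0}^{k+2}$ is a strictly smaller trap space and $*^{n+k+2}$ is not minimal; but any minimal trap space containing $\vb 1^{n+k+2}$ would have to contain $\T(\vb 1^{n+k+2}) = *^{n+k+2}$ and therefore coincide with it, a contradiction. The main subtlety I foresee is the book-keeping involved in applying Observations~\ref{obs:auxiliary_vars_expansion} and~\ref{obs:vars_expansion_arbitrary} at a larger dimension than they are stated for, but since their proofs depend only on the structural form of the updates, this extension is routine.
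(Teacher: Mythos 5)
Your proof is correct and follows essentially the same route as the paper: the same BN from Lemma~\ref{lem:locally-monotone}, the same candidate configuration $\vb 1^{n+k+2}$, and the same two-branch reduction from TAUTOLOGY on DNFs, concluding in the non-tautology case that any minimal trap space containing $\vb 1^{n+k+2}$ would have to be $*^{n+k+2}$, which is not minimal because $\vb y\mathbf{0}^{k+2}$ is a smaller trap space. The only cosmetic difference is that you free the two auxiliary coordinates via Observation~\ref{obs:auxiliary_vars_expansion} with $\phi$ instantiated as $\bigvee_{j=1}^k \vb x_{n+j}$, whereas the paper notes directly that $\vb x_{n+k+2}=1$ forces both auxiliary functions to evaluate to $0$; both justifications are valid.
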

\begin{proof}
	Suppose $n\in \mathbb{Z}^+$ and a Boolean function $\phi: \mathbb{B}^n \rightarrow \mathbb{B}$ in a DNF with $k$ conjunctive clauses are given (i.e., $\phi(\vb y):=\bigvee_{j=1}^{k} c_j(\vb y)$). Consider the associated TAUTOLOGY problem  $\forall \vb y \, \phi(\vb y)$, which is  coNP-complete. We prove the theorem by showing that the TAUTOLOGY problem is true if and only if IN-MINTRAP($f$,$\vb 1^{n+k+2}$) is true for $f$ defined by \eqref{eq:reduction_locally_monotone_1}--\eqref{eq:reduction_locally_monotone_4}.
	
	If $\forall \vb y \, \phi(\vb y)$ is true, then $*^{n+k+2}$ is the unique trap space by Lemma \ref{lem:locally-monotone} and thus $\vb 1^{n+k+2}$ belongs to a minimal trap space. For the remaining case where $\exists \vb y \, \neg \phi(\vb y)$ is true, 
	\begin{align*}
		\mathbf{T}(\vb 1^{n+k+2}) \supseteq  & \mathbf{T}(\vb 1^{n+k} *^{2}) && \because \vb x_{n+k+2}=1 \\
		\supseteq & \mathbf{T}(*^{n+k+2})  && \because Remark~\ref{obs:vars_expansion_arbitrary} \\
		= & *^{n+k+2}.
	\end{align*}
	However, $\vb *^{n+k+2}$ is not  a minimal trap space because there is a smaller trap space $\vb h' = \vb y\mathbf{0}^{k+2}$. This completes the proof.
\end{proof}

\subsection{With local functions represented as truth tables, BDDs, and double DNFs\label{sec:hardness-truth-table}} 

We now consider any BN whose local functions are represented either as truth tables, BDDs, or double DNFs.
In Theorem \ref{thm:tt}, we prove the lower bound results to the computational complexity of MINTRAP and IN-MINTRAP problems by reduction of TAUTOLOGY to 3DNF. Combined with the  upper bound results presented in Sec. \ref{subsec:upper-bound_results}, the completeness is shown.

Consider the encoding of the clauses $\phi$ as the BN $f$ defined by \eqref{eq:reduction_locally_monotone_1}--\eqref{eq:reduction_locally_monotone_4}.
Remark that all local functions but  $f_{n+k+1}$ in \eqref{eq:reduction_locally_monotone_3} depend on at most 5 variables, and thus each of them can be encoded in constant space and time as a truth table, a BDD, or double DNFs.
However, the local function $f_{n+k+1}$ in \eqref{eq:reduction_locally_monotone_3} depends on $k+1$ variables, where $k$ is the number of clauses in the
DNFs. Therefore, converting this local function may require an exponential time and space. We resolve this issue by appending a small number of auxiliary variables  that correspond to local functions having a constant size. Note that \eqref{eq:reduction_locally_monotone_3} is true whenever at least one of the clauses can be evaluated to be true and the component $\vb x_{n+k+2}$ is false.
This definition can be incorporated by appending $k$ additional components with at most two literals to the BN so that the $j$-th element of the new components is evaluated to be true if either 
$c_j$ or the $(j-1)$-th element of the new components can be true for $j \in [1,k]$. As a consequence, the $k$-th of the additional components is true whenever at least one clause of $\phi$ can be evaluated to be true.
We adapt this idea by expanding locally monotone BN \eqref{eq:reduction_locally_monotone_1}--\eqref{eq:reduction_locally_monotone_4} to  \eqref{eq:reduction_tt_1}--\eqref{eq:reduction_tt_6}, which can be encoded in constant space and time as truth table, BDD, or double DNFs. We employ Remark \ref{obs:tt_expansion} and Lemma \ref{lem:truth-table} to prove Theorem \ref{thm:tt}.
\begin{align}
	\forall i\in \range n\quad
	f_i(\vb x) &= (\vb x_i \wedge \neg \vb x_{n+k+1}) \vee \vb x_{n+k+2} \label{eq:reduction_tt_1}
	\\ 
	\forall j\in \range k,\quad
	f_{n+j}(\vb x) &= (c_j(\vb x_{[1,n]}) \wedge \neg \vb x_{n+k+1}) \vee \vb x_{n+k+2} \label{eq:reduction_tt_2}
	\\ 
	f_{n+k+1}(\vb x) &=  \vb x_{n+1} \label{eq:reduction_tt_3}\\
	\forall j\in \frange 2 k,\quad
	f_{n+k+j}(\vb x) &=  \vb x_{n+j} \vee \vb x_{n+k+j-1} \label{eq:reduction_tt_4}\\
	f_{n+2k+1}(\vb x) &= \vb x_{n+2k} \wedge \neg \vb x_{n+2k+2} \label{eq:reduction_tt_5}
	\\
	f_{n+2k+2}(\vb x) &= \vb x_{n+2k+1}\wedge \neg \vb x_{n+2k+2} \label{eq:reduction_tt_6}
\end{align}

\begin{remark} \label{obs:tt_expansion} 
	Consider a BN $f$ given as \eqref{eq:reduction_tt_1}--\eqref{eq:reduction_tt_6}.
	If $\mathbf{T}(\vb h)_i = *$ for all $i \in [n+1,n+k]$, we can sequentially show that $\mathbf{T}(\vb h)_{n+k+j} = *$ by increasing $j$ from 1 to $k$. This is because  $\mathbf{T}(\vb h)_{n+j}$ and $\mathbf{T}(\vb h)_{n+k+j-1}$ are both $*$ and thus Eq.\eqref{eq:reduction_tt_4} can be evaluated to be both 0 and 1.\\
\end{remark}

\begin{lemma} \label{lem:truth-table}
	Consider $n\in \mathbb{Z}^+$ and a Boolean function $\phi: \mathbb{B}^n \rightarrow \mathbb{B}$ given as a 3DNF with $k$ conjunctive clauses that contain at most three literals, i.e., $\phi(\vb y):= \bigvee_{j=1}^{k} c_j(\vb y)$. Boolean formula $\forall \vb y \phi(\vb y)$ is true if and only if BN $f: \mathbb B^{n+2k+2}\to \mathbb B^{n+2k+2}$ with the local functions defined by \eqref{eq:reduction_tt_1}--\eqref{eq:reduction_tt_6} has the unique minimal trap space $*^{n+2k+2}$.
\end{lemma}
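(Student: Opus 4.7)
The plan is to adapt the proof of Lemma~\ref{lem:locally-monotone}. The only structural difference between the two BNs is that the large-fan-in disjunction $\bigvee_{j=1}^k \vb x_{n+j}$ of~\eqref{eq:reduction_locally_monotone_3} has been decomposed into the binary-OR chain $f_{n+k+1}, \ldots, f_{n+2k}$ of~\eqref{eq:reduction_tt_3}--\eqref{eq:reduction_tt_4}, whose final link $\vb x_{n+2k}$ carries the same information as the old $\bigvee_{j=1}^k \vb x_{n+j}$. The auxiliary-switch pair now sits at positions $n+2k+1$ and $n+2k+2$ and plays exactly the role played by the corresponding pair in Observation~\ref{obs:auxiliary_vars_expansion}. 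Because each local function of $f$ then depends on at most five variables (three literals from one clause plus two switches, or two chain inputs, or two auxiliary variables), it admits a constant-size encoding as truth table, BDD, or double DNF, which is what will make the reduction usable in the three settings of Sec.~\ref{sec:hardness-truth-table}.

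For the forward direction, I would assume $\forall \vb y\, \phi(\vb y)$ and mimic the case analysis of Lemma~\ref{lem:locally-monotone} by splitting on the value of the reset switch at component $n+2k+1$. In each case the argument reduces to forcing the chain tail $\vb x_{n+2k}$ to evaluate to $1$ somewhere in the hypercube, which then activates Observation~\ref{obs:auxiliary_vars_expansion} on the switch pair with $\phi := \vb x_{n+2k}$, and finally Observation~\ref{obs:vars_expansion_arbitrary} on the remaining components. The one genuinely new ingredient, which I expect to be the main obstacle, is a \emph{chain-propagation} claim: if some clause indicator $\vb x_{n+j^*}$ evaluates to $1$ at a vertex of the trap space, then so does $\vb x_{n+2k}$. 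I would prove this by a short induction on $j$ from $j^*$ to $k$, using that trap spaces are closed under single-component asynchronous updates: the rule $f_{n+k+j}(\vb x) = \vb x_{n+j} \vee \vb x_{n+k+j-1}$ forces each successive chain variable to be driven to $1$ without leaving the trap space, so the $1$-signal propagates all the way to $\vb x_{n+2k}$.

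For the converse direction, if $\exists \vb y\, \neg \phi(\vb y)$, then I would verify directly that $\vb y\, \vb 0^{2k+2}$ is a trap space strictly smaller than $*^{n+2k+2}$. Indeed every clause $c_j(\vb y) = 0$ and every auxiliary variable starts at $0$, so each clause indicator, each chain variable, and both switches are re-evaluated to $0$ by $f$, while each main component $\vb x_i$ is re-evaluated to $\vb y_i$. This witnesses that $*^{n+2k+2}$ is not minimal and concludes the ``if and only if''. Apart from the chain-propagation claim, the rest of the proof is essentially a line-by-line transliteration of the corresponding steps in Lemma~\ref{lem:locally-monotone}.
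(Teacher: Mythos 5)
Your overall plan coincides with the paper's proof: the same case split on the value of $\vb h_{n+2k+1}$, the same witness $\vb y\vb 0^{2k+2}$ for the converse, and your chain-propagation claim is precisely the paper's step in which, increasing $j$ from $j^*+1$ to $k$, the $(n+k+j)$-th component is sequentially evaluated to $1$ inside the trap space (your reading of \eqref{eq:reduction_tt_1}--\eqref{eq:reduction_tt_2} as being gated by $\vb x_{n+2k+1}$ and $\vb x_{n+2k+2}$ is also the intended one, as the paper's own proof confirms). There is, however, one concrete gap at the end of your forward direction: you conclude by applying ``Observation~\ref{obs:vars_expansion_arbitrary} on the remaining components'', but that observation only covers components whose local function has the form $(\phi'\wedge\neg\vb x_{n+2k+1})\vee\vb x_{n+2k+2}$, i.e., components $1,\ldots,n+k$. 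The chain components $n+k+1,\ldots,n+2k$, governed by \eqref{eq:reduction_tt_3}--\eqref{eq:reduction_tt_4}, are not of this form, so after Observations~\ref{obs:auxiliary_vars_expansion} and~\ref{obs:vars_expansion_arbitrary} you have only shown that $\T(\vb h)$ is free on components $1,\ldots,n+k$ and $n+2k+1,n+2k+2$; some chain components may still be fixed (for instance at the value $1$ that your propagation step forced), which does not yet give $\T(\vb h)=*^{n+2k+2}$ and hence does not yet give uniqueness of the minimal trap space.

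The repair is a second induction along the chain, symmetric to your propagation claim, and it is exactly the last inclusion in the paper's proof of each case: once the clause indicators $\vb x_{n+1},\ldots,\vb x_{n+k}$ are free, $f_{n+k+1}=\vb x_{n+1}$ takes both values on vertices of the trap space, so component $n+k+1$ must be $*$; then, for $j=2,\ldots,k$, both inputs $\vb x_{n+j}$ and $\vb x_{n+k+j-1}$ of $f_{n+k+j}=\vb x_{n+j}\vee\vb x_{n+k+j-1}$ are free, so $f_{n+k+j}$ takes both values and component $n+k+j$ must be $*$ as well. With this step added, your argument goes through in both Case (i) and Case (ii) exactly as in the paper, and your converse direction (checking that $\vb y\vb 0^{2k+2}$ is a fixed point, hence a trap space strictly inside $*^{n+2k+2}$) is correct as written.
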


\begin{proof}
	If $\forall \vb y \, \phi(\vb y)$ is true,  any hypercube $\vb h\in {\{0,1,*\}}^{n+2k+2}$ satisfies $\mathbf{T}(\vb h)=*^{n+2k+2}$ by the  \textbf{Case (i)} and \textbf{Case (ii)}. 
	\begin{description}
		\item[Case (i)]: When  $\vb h_{n+2k+1}=0$,~\\
		Eq. \eqref{eq:reduction_tt_2} can be simplified to $c_j(\vb x_{[1,n]}) \vee \vb x_{n+2k+2}$ because every configuration satisfies $\vb  x_{n+2k+1}=0$.
		For an arbitrary element $\vb z \in v(\vb h)$, we can find $j^*\in [1,k]$ such that $c_{j^*}
		({\vb x}_{[1,n]})=1$ since $\forall \vb y \, \phi(\vb y)$ is true. Therefore, $\mathbf{T}(\vb h)_{n+j^*} \in \{1,*\}$ and subsequently, 
		\begin{align*}
			\mathbf{T}(\vb h) \supseteq &  \mathbf{T}(\vb h_{[1,n+k+j^*-1]} \vb 1 \vb h_{[n+k+j^*+1,n+2k+2]}) && \because {\vb x}_{n+k+j^*}  \text{ can be evaluated to be 1}\\
			\supseteq &  \mathbf{T}(\vb h_{[1,n+k+j^*-1]} \vb 1^{(k-j^*+1)} \vb h_{[n+2k+1,n+2k+2]})
			&& \because \text{Increasing } j \text{ from } (j^*+1) \text{ to } k, {\vb x}_{n+k+j} \\
			&&& \quad \text{can be sequentially evaluated to be 1}\\
			\supseteq &  \mathbf{T}(\vb h_{[1,n+k+j^*-1]} \vb 1^{(k-j^*+1)}*^{2}) && \because Remark~\ref{obs:auxiliary_vars_expansion} \\
			\supseteq &  \mathbf{T}(\vb *^{n+k} \vb h_{[n+k+1,n+k+j^*-1]} \vb 1^{(k-j^*+1)}*^{2}) && \because Remark~\ref{obs:vars_expansion_arbitrary} \\
			\supseteq & \mathbf{T}(*^{n+2k+2}) && \because Remark~\ref{obs:tt_expansion}\\
			= & *^{n+2k+2}.
		\end{align*}
		\item[Case (ii)]: When  $\vb h_{n+2k+1}\in \{1,*\}$,~\\
		Eq. \eqref{eq:reduction_tt_6} simplifies to $\neg {\vb x}_{n+2k+2}$. Consequently,
		\begin{align*}
			\mathbf{T}(\vb h) \supseteq & \mathbf{T}(\vb h_{[1,n+2k+1]} *) && \because Remark~\ref{obs:expansion_negation} \\
			\supseteq &  \mathbf{T}(\vb h_{[1,n]} {\vb 1}^k \vb h_{[n+k+1,n+2k+1]} *) && \because {\vb x}_{n+2k+2} \text{ can be evaluated to be 1}\\
			\supseteq &  \mathbf{T}(\vb h_{[1,n]} {\vb 1}^{2k} \vb h_{n+2k+1} *) && \because \forall j\in [1,k], {\vb x}_{n+k+j} \text{ can be evaluated to be 1 by } {\vb x}_{n+j}=1\\
			\supseteq &  \mathbf{T}(\vb h_{[1,n]} {\vb 1}^{2k} *^{2}) && \because Remark~\ref{obs:auxiliary_vars_expansion}\\
			\supseteq & \mathbf{T}(*^{n+k}{\vb 1}^{k} *^{2}) && \because Remark~\ref{obs:vars_expansion_arbitrary} \\
			\supseteq & \mathbf{T}(*^{n+2k+2}) &&  \because Remark~\ref{obs:tt_expansion}\\
			= & *^{n+2k+2}.
		\end{align*}
	\end{description}
	Therefore, $*^{n+2k+2}$ is the unique minimal trap space. 
	On the other hand, if $\exists \vb y \, \neg \phi(\vb y)$ is true, $*^{n+2k+2}$ is not a minimal trap space because there is a smaller trap space $\vb h' = \vb y \mathbf{0}^{2k+2}$. Hence the lemma holds.
\end{proof}

\begin{theorem}\label{thm:tt}
	MINTRAP and IN-MINTRAP are coNP-hard for BNs with local functions represented with
	truth tables, binary decision diagrams, and double DNFs.
\end{theorem}
\begin{proof}
	The local functions defined by \eqref{eq:reduction_tt_1}--\eqref{eq:reduction_tt_6} can be encoded in a polynomial time as truth tables, BDDs,
	or double DNFs. Therefore, the theorem holds by Lemma \ref{lem:truth-table}.
\end{proof}

\subsection{Functional graphs of BNs \label{sec:hardness-functional-graph}}

Now consider the case when
the BN $f:\mathbb B^n \rightarrow \mathbb B^n$ is represented by its functional digraph $G = (V,E)$
with $V=\B^n$ and $E = \{ (\vb x, f(\vb x)) \mid \vb x\in\B^n\}$.
Given a vertex $\vb x\in V$, we write $out(\vb x) = \{ \vb y\mid (\vb x,\vb y)\in E\}$. 
Note that in the case of the functional graph, $out(\vb x) = \{ f(\vb x)\}$, which is a singleton set.
Given a set of vertices $V'\subseteq V$, we can consider a subgraph $G_{V'}=(V',\{(u,w)\in E\mid
u\in V', w\in V'\})$.

For a given sub-hypercube $\vb h\in\C^n$ to be a trap space, each $\vb x\in \vertices(\vb h)$ must
verify that $out(\vb x)\subseteq \vertices(\vb h)$.
Therefore, TRAPSPACE can be solved in time linear to the size of $G$ (number of vertices plus edges, $|V|+|E|$).

Our algorithm for the decision of MINTRAP uses two auxiliary functions \texttt{SUB-HYPERCUBE} and \texttt{SATURATE}.
Function \texttt{SUB-HYPERCUBE} returns the smallest
enclosing sub-hypercube for a given a non-empty sublist of vertices $W\subseteq V$. Function \texttt{SATURATE}  computes
$\T(\texttt{SUB-HYPERCUBE}(W))$ for a given non-empty sublist of vertices $W\subseteq V$. In other words, \texttt{SATURATE} computes the smallest sub-hypercube that encloses $W$ and  is
closed by $f$.

\noindent
\texttt{SUB-HYPERCUBE}($W:=(W_1, \ldots)$):
\begin{lstlisting}
$\vb h$ = $W_1$
for each $\vb x\in W$:
	for $i\in\range n$:
        if $\vb h_i\in\B$ and $\vb h_i\neq \vb x_i$:
            $\vb h_i := \ast$
return $\vb h$
\end{lstlisting}

\noindent
\texttt{SATURATE}($W$):
\begin{lstlisting}
$\vb h := $ SUB-HYPERCUBE($W$)
repeat
	$\vb h' := \vb h$
	$W := \vertices(\vb h) \cup \bigcup_{u\in \vertices(\vb h)} out(u)$
	$\vb h := $ SUB-HYPERCUBE($W$)
until $\vb h = \vb h'$
return $\vb h$
\end{lstlisting}
Remark that \texttt{SATURATE} runs in a polynomial time to the size of $G$ as the the loop in line 2-6
is performed at most $n$
times.

One can decide whether the sub-hypercube $\vb h$ is a minimal trap space by computing the terminal strongly connected components of $G$ which are enclosed in $\vb h$ and verify that their smallest enclosing trap space is $\vb h$.
Indeed, consider that $\vb h$ is a trap space. By definition, the saturation of any set of its vertices gives a trap spaces which is either equal to or smaller than $\vb h$.
Then, remark that any trap space within $\vb h$ contains at least one terminal strongly connected component of $G_{\vertices(\vb h)}$.
Therefore, it is sufficient to verify that the saturation of all these terminal strongly connected components are not strictly smaller than $\vb h$ to determine that $\vb h$ is minimal.

We call the algorithm computing the
terminal strongly connected components \texttt{terminal-SCCs} and it can be done in a
polynomial time to the size of $G$ (e.g., with Tarjan's algorithm~\cite{Tarjan72}). 

\noindent
\texttt{IS\_MINTRAP}($G$, $\vb h$):
\begin{lstlisting}
if not TRAPSPACE($G$, $\vb h$):
	return False
tSCCs := terminal-SCCs($G_{\vertices(\vb h)}$)
for each $W$ in tSCCs:
	if SATURATE($W$) $\neq$ $\vb h$:
        return False
return True
\end{lstlisting}
This algorithm runs in a polynomial time to the size of $G$.
Finally, remark that IN-MINTRAP($f$, $\vb x$) can be decided using \texttt{IS\_MINTRAP}($G$, \texttt{SATURATE}($\{\vb x\}$)), which  also runs in a polynomial time to the size of $G$.

\begin{theorem}
	TRAPSPACE, MINTRAP, and IN-MINTRAP are in P for BNs 
	given  as their functional graph.
\end{theorem}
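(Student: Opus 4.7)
The plan is to give polynomial-time algorithms for each of the three problems, where the relevant notion of polynomial is with respect to the input size $|V|+|E|\in\Theta(2^n)$ of the functional graph $G=(V,E)$. The three algorithms \texttt{TRAPSPACE}, \texttt{SATURATE}, and \texttt{MINTRAP} are already sketched above, so the proof would assemble a correctness argument and a uniform running-time bound for each one.

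First, for TRAPSPACE$(G,\vb h)$, I would simply iterate over every $\vb x\in \vertices(\vb h)$ and verify that the unique element of $out(\vb x)$ lies in $\vertices(\vb h)$; this takes $O(n|V|)$ time. Next, I would prove that \texttt{SATURATE}$(W)$ correctly returns $\T(\texttt{SUB-HYPERCUBE}(W))$, i.e., the smallest trap space enclosing $W$: by induction on the iterations, the current hypercube $\vb h$ is always contained in every trap space enclosing $W$, so when the termination condition is met (closure under $f$) the result is precisely the inclusion-minimal such trap space. Each iteration either strictly turns some fixed coordinate of $\vb h$ into $\ast$ or terminates, bounding the number of iterations by $n$ and keeping the running time polynomial in $|V|+|E|$.

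The central step is the correctness of the MINTRAP algorithm, which rests on the claim that a trap space $\vb h$ is minimal if and only if, for every terminal strongly connected component $W$ of $G_{\vertices(\vb h)}$, we have $\texttt{SATURATE}(W)=\vb h$. The forward direction follows because each $\texttt{SATURATE}(W)$ is a trap space included in $\vb h$ and therefore equals $\vb h$ by minimality. For the converse, assume a proper trap subspace $\vb h'\subsetneq \vb h$. Because $\vertices(\vb h')$ is finite and forward-closed under $f$, the induced subgraph $G_{\vertices(\vb h')}$ contains at least one terminal SCC $W$; forward-closure of $\vertices(\vb h')$ inside $\vertices(\vb h)$ ensures $W$ is also terminal in $G_{\vertices(\vb h)}$. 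Then $\texttt{SATURATE}(W)\subseteq \vb h'\subsetneq \vb h$ since $\vb h'$ is a trap space containing $W$, contradicting $\texttt{SATURATE}(W)=\vb h$. Combining Tarjan's algorithm for terminal SCCs with a call to \texttt{SATURATE} on each of them yields a procedure polynomial in $|V|+|E|$.

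Finally, for IN-MINTRAP$(G,\vb x)$, I would observe that $\vb x$ belongs to some minimal trap space if and only if $\T(\vb x)$ is itself minimal: $\T(\vb x)$ is contained in every trap space enclosing $\vb x$, so if $\vb x\in\vertices(\vb h^*)$ for a minimal $\vb h^*$ then $\T(\vb x)\subseteq\vb h^*$, which forces equality. Since $\T(\vb x)=\texttt{SATURATE}(\{\vb x\})$, a single call to the MINTRAP procedure on this result decides IN-MINTRAP in polynomial time. The main obstacle is the converse direction of the MINTRAP characterization, which hinges on the elementary but essential fact that every finite digraph whose vertex set is forward-closed contains at least one terminal SCC, together with the observation that such an SCC remains terminal when re-embedded into any forward-closed superset of vertices.
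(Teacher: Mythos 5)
Your proposal is correct and follows essentially the same route as the paper: a linear scan for TRAPSPACE, the \texttt{SATURATE} computation of the smallest enclosing trap space, the characterization of minimality via terminal SCCs of $G_{\vertices(\vb h)}$, and the reduction of IN-MINTRAP to MINTRAP on $\texttt{SATURATE}(\{\vb x\})$. The only difference is that you spell out the correctness argument (existence of a terminal SCC inside any proper trap subspace and its terminality in the larger induced graph), which the paper asserts without detailed proof.
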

The functional graph of $f$ corresponds to the so-called state transition graph with the synchronous
(parallel) update mode: each edge corresponds to a synchronous transition.
One can remark that the above algorithms give equivalent results with the fully asynchronous state
transition graph where $out'(\vb x) = \{ \vb y \in\B^n\mid \exists i\in\range n, \vb y_i=f_i(\vb x),
\forall j\in\range n, j\neq i, \vb x_j=\vb y_j\}$.
Indeed, \texttt{SUB-HYPERCUBE}($\{\vb x, f(\vb x)\}$) is always equal to  \texttt{SUB-HYPERCUBE}($\{\vb x\}\cup
out'(\vb x)$); remark that, for any $i\in\range n$,
$f_i(\vb x) \neq \vb x_i$ if and only if
there exists $\vb y\in out'(\vb x)$ such that $\vb y_i\neq \vb x_i$.

\section{Conclusion} \label{sec:conclusion}
In this paper, we characterized the computational complexity of three important decision problems related to trap spaces in Boolean networks considering various representations and the locally monotone case.
We demonstrated that, in general, determining minimal trap space properties and the membership of
configurations to minimal trap spaces are equivalent to solving the satisfiability of 
Boolean formulas with two alternating quantifiers $\forall$ and $\exists$. Hence, our results show that they are $\Pi_2^{\text P}$-complete.
However, whenever restricting to the cases whenever BN is locally monotone, or whenever its local
functions are encoded as truth tables, binary decision diagrams, or double DNFs (such as Petri nets
encodings of BNs), the complexity drops by one level in the polynomial hierarchy and becomes
equivalent to the decision of the tautology property of propositional formulas.
Finally, whenever the BN is given by its functional graph (corresponding to its synchronous state
transition graph), minimal trap space properties can be decided by deterministic algorithms in
a polynomial time.

In practice, solving coNP problems can be tackled with SAT solvers, whereas solving $\Pi_2^{\text P}$ necessitates more elaborated approaches, such as Answer-Set Programming~\cite{Eiter1995} or by decomposing the problem into two parts and alternately solving them, as demonstrated by a recent study on the control of fixed points \cite{moon2022Bilevel}.

Future direction may consider studying the computational complexity of problems related to the set
of minimal trap spaces of a BN, such as deciding whether all the minimal trap spaces satisfy a given
property. This will give insight into the complexity for control problems related to minimal trap
spaces in BNs, as tackled in~\cite{MarkerReprogramming,Rozum2021}.

\bibliographystyle{siamplain}
\bibliography{reference}

\end{document}